\documentclass[a4paper,USenglish,cleveref, autoref,thm-restate]{lipics-v2021}

\pdfoutput=1
\hideLIPIcs  
\nolinenumbers

\bibliographystyle{plainurl}

\usepackage{graphicx}
\usepackage[basic]{complexity}
\usepackage{xspace}
\usepackage{xcolor}
\usepackage{csquotes}

 \usepackage{array}
 \usepackage{booktabs}
 \captionsetup[subfigure]{labelformat=simple,size=footnotesize}

\hypersetup{
    colorlinks,
    linkcolor={red!50!black},
    citecolor={blue!50!black},
    urlcolor={blue!80!black}
}

\newcommand{\TSAT}{\textsc{3-Sat}\xspace}
\newcommand{\MPTSAT}{\textsc{Monotone-Planar-3-Sat}\xspace}
\newcommand{\PTSAT}{\textsc{Planar-3-Sat}\xspace}

\newcommand{\segRepr}{segment representation\xspace}

\title{Reconfiguration of unit squares and disks:
$\PSPACE$-hardness in simple settings} 

\author{Mikkel Abrahamsen}{University of Copenhagen, Denmark}{miab@di.ku.dk}{https://orcid.org/0000-0003-2734-4690}{}
\author{Kevin Buchin}{TU Dortmund, Germany}{kevin.buchin@tu-dortmund.de}{https://orcid.org/0000-0002-3022-7877}{}
\author{Maike Buchin}{Ruhr University Bochum, Germany}{maike.buchin@rub.de}{https://orcid.org/0000-0002-3446-4343}{}
\author{Linda Kleist}{Universität Potsdam, Germany}{kleist@cs.uni-potsdam.de}{https://orcid.org/0000-0002-3786-916X}{}
\author{Maarten Löffler}{Utrecht University, The Netherlands}{m.loffler@uu.nl}{https://orcid.org/0009-0001-9403-8856}{}
\author{Lena Schlipf}{Universität Tuebingen, Germany}{lena.schlipf@uni-tuebingen.de}{https://orcid.org/0000-0001-7043-1867}{}
\author{André Schulz}{FernUniversit\"at in Hagen, Germany}{andre.schulz@fernuni-hagen.de}{0000-0002-2134-4852}{}
\author{Jack Stade}{University of Copenhagen, Denmark}{jast@di.ku.dk}{https://orcid.org/0009-0007-9153-6589}{}

\authorrunning{Abrahamsen, Buchin, Buchin, Kleist, Löffler, Schlipf, Schulz, Stade} 

\Copyright{Mikkel Abrahamsen, Kevin Buchin, Maike Buchin, Linda Kleist, Maarten Löffler, Lena Schlipf, André Schulz and Jack Stade} 

\ccsdesc[500]{Theory of computation~Computational geometry}
\ccsdesc[500]{Theory of computation~Complexity theory and logic}

\keywords{reconfiguration, unit square, unit disk, unlabeled, labeled, simple polygon, polygon} 

\category{} 

\relatedversion{}

\acknowledgements{This work was initiated at the Dagstuhl Seminar 24072: Triangulations in Geometry and Topology. We thank the organizers and the participants for the fruitful atmosphere.}

\begin{document}

\maketitle

\begin{abstract}
We study two well-known reconfiguration problems.
Given a start and a target configuration of geometric objects in a polygon, we wonder whether we can move the objects from the start configuration to the target configuration while avoiding collisions between the objects and staying within the polygon.
Problems of this type have been considered since the early 80s by roboticists and computational geometers.
In this paper, we study some of the simplest possible variants where the objects are labeled or unlabeled unit squares or unit disks.
In unlabeled reconfiguration, the objects are identical, so that any object is allowed to end at any of the targets positions. In the labeled variant, each object has a designated target position. The results for the labeled variants are direct consequences from our insights on the unlabeled versions. 

We show that it is $\PSPACE$-hard to decide whether there exists a reconfiguration of (unlabeled/labeled) unit squares even in a simple polygon.
Previously, it was only known to be $\PSPACE$-hard in a polygon with holes for both the unlabeled [Solovey and Halperin, Int. J. Robotics Res. 2016] and labeled version [Brunner, Chung, Demaine, Hendrickson, Hesterberg, Suhl, Zeff, FUN 2021].
Our proof is based on a result of independent interest, namely that reconfiguration between two satisfying assignments of a formula of \MPTSAT is also $\PSPACE$-complete.
The reduction from reconfiguration of \MPTSAT to reconfiguration of unit squares extends techniques recently developed to show $\NP$-hardness of packing unit squares in a simple polygon [Abrahamsen and Stade, FOCS 2024].
We also show $\PSPACE$-hardness of reconfiguration of (unlabeled/labeled) unit disks in a polygon with holes.
Previously, it was  known that unlabeled reconfiguration of  disks of two different sizes was $\PSPACE$-hard [Brocken, van der Heijden, Kostitsyna, Lo-Wong and Surtel, FUN 2021]. 
\end{abstract}

\newpage
\section{Introduction}

In \emph{geometric reconfiguration} we are given a start configuration of geometric \emph{objects} and a desired target configuration, see \cref{fig:intro} for an example. A particularly well-studied geometric configuration problem is \emph{multi-robot motion planning}.
The goal is to move the robots (or objects) beginning in the start configuration in some way, avoiding collisions between the robots and with static obstacles, so that we obtain the target configuration.
Problems of this type have been considered since the early 80s by roboticists and computational geometers.
Hopcroft, Schwartz and Sharir~\cite{doi:10.1177/027836498400300405} showed already in 1984 that reconfiguration of labeled rectangles moving in a rectangle is $\PSPACE$-hard. The motivation of their work was ``to find moving systems (necessarily involving arbitrarily many degrees of freedom) whose geometry is \emph{as simple as possible}, but for which the motion planning problem is still PSPACE-hard''.

\begin{figure}[htb]
\centering
\includegraphics[page=1]{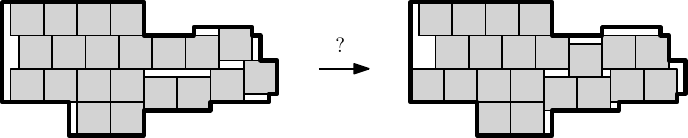}
\caption{Can the left configuration of unit squares in a simple polygon be reconfigured to the right configuration of unit squares?}
\label{fig:intro}
\end{figure}

In this paper, we study some of the simplest possible variants where the objects are unlabeled or labeled axis-aligned congruent squares or unit disks. In case of squares, the objects have to stay axis-aligned during the reconfuguration. For an example, consider \cref{fig:intro}. Obstacles are modeled by defining the workspace in which
the objects are allowed to lie. At any time, the objects have to avoid the complement of the workspace.
Depending on the variant, the workspace is given as a polygon, a simple polygon, or a grid polygon,
that is an orthogonal polygon whose vertex coordinates are integers.
In unlabeled reconfiguration, the objects are indistinguishable, so that any object is allowed to end at any of the target positions. In labeled reconfiguration the target position for
every object is explicitly given.
\Cref{tab:results} presents an overview of the known results and the new results of this paper.

\begin{table}[htb]
    \centering
    \caption{Summary of results in motion planning for multiple objects}
\label{tab:results}
\begin{tabular}{@{}ccccc@{}}
        \toprule
        \textbf{Contribution} & \textbf{Labeled} & \textbf{Complexity} & \textbf{Objects} & \textbf{Workspace} \\ 
        \midrule
\cite{doi:10.1177/027836498400300405} & yes & $\PSPACE$-hard & rectangles & rectangle \\
\cite{DBLP:journals/tcs/HearnD05} & yes & $\PSPACE$-hard & $1\times 2$ rectangles & rectangle \\
\cite{DBLP:conf/fun/BrunnerCDHHSZ21} & yes & $\PSPACE$-hard & unit squares & polygon (with holes) \\
this work & yes & $\PSPACE$-hard & unit squares & simple polygon \\
\cite{kornhauser1984coordinating} & yes & $\P$ & unit squares & grid polygon \\
\cite{DBLP:journals/ijrr/SoloveyH16} & no & $\PSPACE$-hard & unit squares & polygon (with holes) \\
this work & no & $\PSPACE$-hard & unit squares & simple polygon \\
easy exercise & no & trivial `yes' & unit squares &  grid polygon \\
\hline
\cite{DBLP:journals/ipl/SpirakisY84} & yes & strongly $\NP$-hard & disks of three sizes & simple polygon \\
this work & yes & $\PSPACE$-hard & unit disks & polygon (with holes) \\
\cite{DBLP:conf/fun/BrockenHKLS21} & no & $\PSPACE$-hard & disks of two sizes & polygon (with holes) \\
this work & no & $\PSPACE$-hard & unit disks & polygon (with holes) \\
\bottomrule
\end{tabular}
\end{table}

Recent results of $\PSPACE$-hardness of geometric reconfiguration problems have been proven by a reduction from \textsc{Nondeterministic Constrained Logic} (NCL)~\cite{DBLP:conf/icalp/HearnD02,HearnDemaineBook}.
This holds for instance for the works~\cite{DBLP:conf/fun/BrockenHKLS21,DBLP:conf/fun/BrunnerCDHHSZ21,DBLP:journals/tcs/HearnD05,DBLP:journals/ijrr/SoloveyH16}.
Here, a polygon is built ``on top of'' an instance of NCL, which is in itself a plane graph (the problem NCL will be described in detail in \Cref{sec:mptsat}).
This technique inevitably leads to a polygon with holes, because the bounded faces of the NCL-instance are turned into holes of the polygon. 

A similar phenomenon appears in many proofs of $\NP$-hardness of two-dimensional geometric problems where the input is a polygon:
Such reductions are often from a variant of \PTSAT and work by building a polygon on top of the \PTSAT-instance we are reducing from.
Again, this leads to a polygon with a hole for every bounded face of the graph representing the \PTSAT-formula.

Abrahamsen and Stade~\cite{DBLP:journals/corr/abs-2404-09835} recently overcame this obstacle by making a completely different kind of reduction from \PTSAT in order to obtain $\NP$-hardness of the well-known problem of packing unit squares in a simple polygon; a problem that had been known to be $\NP$-hard for polygons with holes since  1981~\cite{DBLP:journals/ipl/FowlerPT81}.
Interestingly, Abrahamsen and Stade~\cite{DBLP:journals/corr/abs-2404-09835} used a non-planar geometric realization of a planar formula, where variables are represented as horizontal rows and edges are vertical columns.
These rows and columns are allowed to cross, but the planarity of the graph ensures that the endpoints of all rows and columns are incident to the outer face of the drawing.
This makes it possible to construct a simple polygon, following the boundary of the outer face, that ``implements'' the functionality of the rows and columns.
The rows and columns of the drawing are
represented by rows and columns of squares, and these likewise cross each other in the interior of the polygon.
The crucial observation is that movement in one direction does not influence movement in the other direction, so binary values can be ``transported''
independently in both directions through a crossing.

\subparagraph{Our contribution.}
In this paper we show $\PSPACE$-hardness of reconfiguration of unit squares in a simple polygon.
We thereby strengthen the result of Solovey and Halperin~\cite{DBLP:journals/ijrr/SoloveyH16}, who showed that the problem is $\PSPACE$-hard for polygons with holes.
To this end, we first show a result of independent interest. The solution space of boolean formulas
is given by an hypercube graph, where edges are formed between assignments which differ in exactly one variable. The satisfying assignments of a formula $\phi$ define a subgraph of this
graph, which we call $G(\phi)$. We call two satisfying assignments \emph{connected} if they are connected in $G(\phi)$. The reconfiguration problem for boolean formulas asks if two satisfying assignments are connected.
We show the following result.

\begin{restatable}{theorem}{monotoneSATpspace}\label{thm:monoton3SATpspace}
The reconfiguration problem for \MPTSAT formulas is $\PSPACE$-complete. 
\end{restatable}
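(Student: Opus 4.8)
The plan is to reduce from the reconfiguration problem for Nondeterministic Constraint Logic (NCL), which Hearn and Demaine showed to be $\PSPACE$-complete~\cite{DBLP:conf/icalp/HearnD02,HearnDemaineBook}, already in its restricted form where the graph is planar and every vertex is either an AND-vertex or an OR-vertex with the standard weight convention. An NCL configuration is an orientation of the edges such that every vertex has sufficient incoming weight; a move flips a single edge while preserving feasibility. Since NCL graphs are planar, we can route everything in the plane, and — crucially for obtaining a \emph{monotone} and \emph{planar} SAT instance — the combinatorics of the constraint graph are fixed, so the clauses we generate have a natural planar incidence structure.

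The first step is to encode an NCL instance as a \MPTSAT formula whose satisfying assignments correspond bijectively (or at least in a reconfiguration-preserving way) to the feasible NCL configurations. I would introduce one Boolean variable per directed edge-orientation (or one variable per edge, with $\mathsf{true}$ meaning ``points toward $u$'' for a fixed endpoint), and then express each AND/OR vertex constraint as a small conjunction of clauses of size at most $3$. The standard subtlety is that a raw encoding of ``at least weight $2$ flows in'' is not monotone; the fix is the usual trick of splitting each edge variable into a positive literal feeding one endpoint and a negated literal feeding the other, and adding consistency clauses, so that all clauses can be taken monotone (all-positive or all-negative) after a suitable relabeling, and planarity of the clause–variable incidence graph is inherited from planarity of the NCL graph by routing each clause gadget near its vertex. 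One must double-check that the \MPTSAT normal form used in this paper (monotone, planar, exactly the structural restrictions assumed earlier) can absorb these gadgets; if the encoding produces clauses of size $2$ or variables of low degree, pad with dummy variables in the standard way.

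The second and more delicate step is to show that the \emph{reconfiguration graphs} correspond, not merely the solution sets. A single NCL edge-flip should translate into a short, bounded-length path in $G(\phi)$ between the corresponding satisfying assignments, and conversely any single-variable flip in $G(\phi)$ that stays satisfying should project to either a valid NCL move or a ``null'' move (e.g. toggling a dummy/auxiliary variable) that does not change the underlying NCL configuration. This is where the main obstacle lies: the auxiliary and consistency variables introduced to force monotonicity create extra freedom, and one must argue that this freedom is ``harmless'' — that from any satisfying assignment one can always reconfigure the auxiliary variables into a canonical state without disturbing the edge variables, so that the quotient of $G(\phi)$ by these harmless moves is exactly the NCL reconfiguration graph. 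Establishing this canonicalization lemma, and then checking that it composes correctly with the edge-flip simulation so that connectivity is preserved in both directions, is the technical heart of the argument. Membership in $\PSPACE$ is immediate, since a nondeterministic polynomial-space machine can guess the flip sequence one step at a time, and $\PSPACE = \NPSPACE$ by Savitch's theorem; combining this with the hardness reduction gives $\PSPACE$-completeness.
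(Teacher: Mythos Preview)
Your high-level strategy matches the paper's: reduce from NCL, one variable per edge, clauses at vertices encoding the AND/OR in-degree constraints, and then argue that edge flips correspond to short paths in $G(\phi)$. The reconfiguration correspondence you sketch (flip $\leftrightarrow$ bounded path, auxiliary freedom is harmless) is also how the paper proceeds. But there is a genuine gap in the monotonicity and planarity step.

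First, ``splitting each edge variable into a positive literal feeding one endpoint and a negated literal feeding the other'' followed by ``a suitable relabeling'' cannot produce monotone clauses. For an edge $e=uv$, the clause at $u$ wants $x_e$ with one sign and the clause at $v$ wants it with the opposite sign; no global relabeling of $x_e$ fixes both endpoints simultaneously. The paper resolves this by \emph{subdividing every edge of the NCL graph}: the edge $e$ is replaced by two variables $x_{e+},x_{e-}$, the original vertices of $G$ become all-positive clauses in these new variables, and each subdivision point contributes an all-negative consistency clause $(\lnot x_{e+}\lor\lnot x_{e-})$. This is the concrete mechanism your ``consistency clauses'' gesture at, but it is not a relabeling trick --- it genuinely doubles the variable set and makes the clause graph bipartite into positive and negative clauses. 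Incidentally, this subdivision is also what makes the reconfiguration clean: an edge reversal becomes the two-step path that passes through $x_{e+}=x_{e-}=\texttt{false}$, which is exactly your ``null'' intermediate state.

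Second, and more seriously, you treat the planarity requirement as merely ``planar incidence graph inherited from the NCL embedding''. But \MPTSAT demands more: there must be a closed curve through all variable nodes that separates the positive clauses from the negative clauses without crossing any incidence edge. You do not address this at all. In the paper this is a substantial separate argument: after subdivision, positive clauses sit at original NCL vertices and negative clauses at subdivision points, and the required separating curve is obtained by orienting the dual graph of $\hat G$, decomposing it into cycles, and gluing them into a single Euler-type tour that has all positive clauses on one side and all negative clauses on the other. Without something equivalent to this dual-graph argument, you have at best a monotone planar formula in the weak sense, not a \MPTSAT instance.
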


For some other versions of  \TSAT, the reconfiguration problem has likewise been shown to be $\PSPACE$-hard, such as \textsc{Monotone-Planar-Not-All-Equal-3-SAT}~\cite{CARDINAL2020332}.
In Table~\ref{table:reconfiguration}, we give a summary of some variants of \TSAT and the complexity of their satisfyability and reconfiguration problems. 
Interestingly, most satisfiability and reconfiguration problems are hard, but there are some exceptions that can be solved in polynomial time.

 \begin{table}[hbt]
     \centering
     \caption{Overview of computational complexity for the satisfiability and reconfiguration problems of fundamental \TSAT variants.
}\label{table:reconfiguration}
     \begin{tabular}{l|c|c}
\toprule
& \textbf{Satisfiability} & \textbf{Reconfiguration} \\\hline
\TSAT & $\NP$-complete~\cite{Karp1972} & $\PSPACE$-complete \cite{dichotomy} \\
\PTSAT &$\NP$-complete \cite{doi:10.1137/0211025} & $\PSPACE$-complete \cite{CARDINAL2020332} \\
\textsc{Positive-1-in-3-SAT} &$\NP$-complete \cite{garey1979computers} & $\P$ (`no' iff assignments differ)   \\
\textsc{Positive-Not-All-Equal-3-SAT} &$\NP$-complete \cite{10.1145/800133.804350} & $\PSPACE$-complete~\cite{dichotomy}\\
\textsc{Mon.-Planar-Not-All-Eequal-3-SAT} & $\P$ \cite{10.1145/800133.804350} & $\PSPACE$-complete  \cite{CARDINAL2020332}
\\
\MPTSAT & $\NP$-complete~\cite{DBLP:journals/ijcga/BergK12} & $\PSPACE$-complete [This work]\\
\bottomrule
\end{tabular}

   \end{table}

Using \Cref{thm:monoton3SATpspace} we then show $\PSPACE$-hardness for reconfiguration of squares.
\begin{restatable}{theorem}{Squares}\label{thm:motionplanningNP}
Unlabeled reconfiguration of $8\times 8$ squares in simple grid polygons is $\PSPACE$-complete.
\end{restatable}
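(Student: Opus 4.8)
\emph{Overview.} The plan has two parts. Membership in $\PSPACE$ is routine, so the substance is the $\PSPACE$-hardness, which I would obtain by reducing from the reconfiguration problem for \MPTSAT (which is $\PSPACE$-complete by \cref{thm:monoton3SATpspace}), adapting the ``non-planar realization of a planar formula'' construction of Abrahamsen and Stade~\cite{DBLP:journals/corr/abs-2404-09835}.

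\emph{Membership.} A configuration of $n$ congruent axis-aligned squares is a point of $\mathbb{R}^{2n}$, and the free configuration space---configurations with no two squares overlapping and every square inside the polygon---is semilinear: it is the Boolean combination of the $O(nm)$ linear inequalities keeping each square inside the polygon (of complexity $m$) together with, for each of the $O(n^2)$ pairs of squares, a disjunction of four linear inequalities forbidding overlap. Deciding whether the start and target configurations lie in the same connected component of such a set is in $\PSPACE$ by standard semialgebraic motion-planning machinery (a roadmap-type algorithm à la Canny, running in polynomial space). Alternatively, one argues that it suffices to consider configurations whose coordinates have polynomially bounded bit-length, which reduces the problem to reachability in an exponential-size graph and hence to $\mathrm{NPSPACE}=\PSPACE$. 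Unlabeledness only simplifies matters, since we track the unordered set of occupied positions.

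\emph{The reduction.} Given a \MPTSAT formula $\phi$ and two satisfying assignments $\alpha_{\mathrm s},\alpha_{\mathrm t}$, I would lay $\phi$ out in the plane with each variable as a long horizontal track, each variable--clause incidence as a vertical track, and the positive clauses above and the negative clauses below (monotonicity plus planarity guarantee that every track endpoint reaches the outer face). On top of this drawing I build a single simple grid polygon. Each horizontal track becomes a corridor carrying a row of $8\times8$ squares that is slightly shorter than the corridor, so the row has exactly two extreme positions, encoding \emph{true} and \emph{false}; each vertical track becomes a corridor carrying a column of squares that can be pushed toward either end; and a variable corridor and an incidence corridor cross in the interior, where---this is the key reusable observation from~\cite{DBLP:journals/corr/abs-2404-09835}---sliding the row horizontally and sliding the column vertically are independent, so a bit can be transported past the crossing in either direction without interference. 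The three incidence corridors of a clause meet at a \emph{clause gadget} designed so that (i) the geometric state at all times certifies that the current truth values satisfy the clause and (ii) a variable's row can slide its gap across an incident crossing only when the resulting assignment keeps that clause satisfied. The $8\times8$ scaling simultaneously provides integer coordinates and enough room for the slack and the gadgets---and some scaling is unavoidable, since over a grid polygon the \emph{unit}-square case is a trivial yes-instance. The start (resp.\ target) configuration is the one in which every row encodes $\alpha_{\mathrm s}$ (resp.\ $\alpha_{\mathrm t}$) and all columns sit in a canonical position.

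\emph{Correctness and the main obstacle.} I then prove that the start configuration reconfigures to the target configuration iff $\alpha_{\mathrm s}$ and $\alpha_{\mathrm t}$ are connected in $G(\phi)$. The forward direction is direct: a single variable flip is realized by retracting the columns of that variable's incident clauses out of their crossings (possible because, by hypothesis, every such clause stays satisfied), sliding the row across, and undoing the column moves, all reversibly; so a walk in $G(\phi)$ lifts to a reconfiguration. For the converse I would establish a ``snapping'' lemma: at every instant of any reconfiguration, all rows except at most one lie within a small tolerance of one of their two canonical positions, so one can read off a well-defined assignment, which must be satisfying (the geometry forbids a row from crossing while a blocking column occupies the crossing) and which changes in at most one variable between consecutive readings; hence any reconfiguration projects to a walk in $G(\phi)$ from $\alpha_{\mathrm s}$ to $\alpha_{\mathrm t}$. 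I expect the clause gadget, and the ``snapping'' lemma needed to show the reduction preserves the whole \emph{reconfiguration structure} rather than just the yes/no answer, to be the hard part: unlike the static packing reduction of~\cite{DBLP:journals/corr/abs-2404-09835}, every gadget move must be reversible, and one must rule out illegal shortcuts---a row sneaking across a partially retracted column, squares escaping their corridors through a crossing, or two far-apart variables being flipped within one continuous motion. Tuning the slack parameters and the crossing geometry so that horizontal and vertical motion genuinely decouple while everything still fits on the integer grid with $8\times8$ squares is where the real work lies; bounding the polygon's size and verifying the wire corridors is then routine.
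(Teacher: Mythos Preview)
Your high-level plan---reduce from \MPTSAT reconfiguration and adapt the Abrahamsen--Stade layout---matches the paper. However, the proposal underestimates how much the AS24 construction must change, and your converse argument rests on a lemma that does not hold for the natural adaptation. As the paper notes in its Discussion, the packings produced by~\cite{DBLP:journals/corr/abs-2404-09835} are \emph{locked}: two packings encoding assignments that differ in a single variable are not reachable from one another, so ``adapting'' that reduction is itself the substantive work. Concretely, the paper needs (i) a three-state variable component (plus/minus plus a \emph{transitional} state to pass between them), (ii) the replacement of each clause $(x_i\vee x_j\vee x_k)$ by three \emph{auxiliary} variable rows linked to the primary rows by PUSH gadgets that raise or lower \emph{pyramids} of squares, (iii) \emph{helper rows} so that pyramids can be created and destroyed rather than being permanently present, and (iv) a redesigned OR gadget that accommodates at most two pyramids and can be reconfigured reversibly. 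None of these appear in your outline, and your simpler ``rows with two extreme positions and columns that retract'' picture is essentially the locked AS24 model.

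The ``snapping'' lemma you propose---that at every instant all but one row is within tolerance of a canonical position---is neither enforced by the construction nor what the paper proves. Nothing in a row-and-column layout prevents several rows from sitting at intermediate positions simultaneously (indeed, in the paper's gadgets many variable components can be transitional at once). The paper's converse argument is entirely different: a \emph{reference center} lemma (\cref{lem:refCenters}) shows every square in a perfect configuration contains a unique $4\times4$ reference cell throughout the motion, so no square can wander; then a \emph{discretization} step (replace each coordinate by its floor and insert unit-time ``jump'' intervals) converts any continuous reconfiguration into one in which a single square moves by one grid unit at a time; finally \cref{lem:implication,lem:or} certify that every intermediate integer configuration encodes a satisfying assignment, and single-square moves change at most one variable. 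Your outline would need either to supply gadgets that genuinely force the at-most-one-in-flux property (which seems hard and is not what AS24 gives you), or to replace the snapping lemma by a reference-center plus discretization argument along the paper's lines.
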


In our construction, the squares can only move locally. Hence, we exactly know the start and end position of each square and also obtain the same result for the labeled version. 
\begin{restatable}{corollary}{SquaresCor}\label{cor:squaresLabelled}
Labeled reconfiguration of $8\times 8$ squares in simple grid polygons is $\PSPACE$-complete.
\end{restatable}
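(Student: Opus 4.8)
The plan is to obtain the corollary directly from \Cref{thm:motionplanningNP} by exploiting the structural feature, already built into the gadgets of that reduction, that every square has only \emph{local} freedom of motion: in any collision-free reconfiguration, and regardless of how the other squares move, the square starting at a position $s$ never leaves a small region $R_s$ of the polygon, and $R_s$ contains exactly one position of the target configuration. Membership in $\PSPACE$ is not an issue here --- imposing labels only restricts the set of accepting target configurations without changing the (bounded-dimensional, semialgebraic) configuration space, so the same argument that places the unlabeled problem in $\PSPACE$ applies verbatim to the labeled version.

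For hardness, let $(P,S,T)$ be the simple grid polygon together with the start and target configurations produced by the reduction behind \Cref{thm:motionplanningNP}. The construction naturally pairs each start square $s \in S$ with the target position $\mu(s) \in T$ lying in its local region $R_s$, and by the locality property $\mu(s)$ is the \emph{only} position of $T$ that the square starting at $s$ can possibly reach; since two squares can never end at the same target, $\mu$ is a bijection $S \to T$. We define a labeled instance on the same polygon $P$ by assigning pairwise distinct labels to the squares of $S$ and declaring that the square starting at $s$ must end at $\mu(s)$; this transformation is clearly polynomial time.

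It remains to check that the labeled instance is a yes-instance iff $(P,S,T)$ is a yes-instance of the unlabeled problem. One direction is immediate: forgetting the labels turns any solution of the labeled instance into a collision-free motion from $S$ to $T$. Conversely, given any collision-free motion from $S$ to $T$, the locality property forces the square that started at $s$ to stay inside $R_s$, so at the end of the motion it occupies the unique target position of $T$ contained in $R_s$, namely $\mu(s)$; hence the very same motion already respects the labeling. Thus the unlabeled yes/no answer is preserved, and $\PSPACE$-hardness transfers.

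The only point where genuine care is needed is the locality property itself: one must verify that in each gadget a square cannot be squeezed out of its intended region no matter what the remaining squares do. However, this is precisely the property that the analysis of \Cref{thm:motionplanningNP} already establishes --- the gadgets are designed so that each square is confined to a small pocket containing a single start and a single target position, which is exactly what makes the start and end position of every square predictable. Consequently no new geometric construction or case analysis is required, and the corollary is essentially a reinterpretation of the instances built for the unlabeled result.
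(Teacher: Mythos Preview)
Your proposal is correct and follows essentially the same approach as the paper: both derive the labeled result from the unlabeled one by observing that in the constructed instances every square is confined to a small region containing exactly one start and one target position, so any unlabeled reconfiguration already respects the natural labeling. The paper makes the locality property precise via \Cref{lem:refCenters} (each square in a perfect configuration contains a fixed reference center throughout), whereas you invoke it more abstractly as a feature of the gadget analysis, but the argument is the same.
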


\cref{thm:motionplanningNP,cor:squaresLabelled} strengthen the results of Solovey and Halperin~\cite{DBLP:journals/ijrr/SoloveyH16} and Brunner, Chung, Demaine, Hendrickson, Hesterberg, Suhl and Zeff~\cite{DBLP:conf/fun/BrunnerCDHHSZ21}, who showed that the problem is $\PSPACE$-hard for polygons with holes in the unlabeled and labeled variants, respectively. In contrast, for even simpler settings, the problem is trivial or polynomial decidable, e.g., it is an easy exercise to reconfigure any two configurations (with the same number) of unlabeled unit squares in grid polygons. The reconfiguration of unit squares in a rectangle includes the famous 15-puzzle and is generalizations which are polynomial time decidable~\cite{archer1999modern,kornhauser1984coordinating}.
In contrast, in  grid polygons, it is an easy exercise to show that reconfiguration  of unlabeled unit squares is always doable (if the number of squares agrees) and reconfiguration of labeled unit squares is polynomial time decidable~\cite{kornhauser1984coordinating}. Hence, our result is close to the border of polynomial-time decidability.

Lastly, we consider the reconfiguration of unit disks.

\begin{restatable}{theorem}{DisksUnlabeled}\label{thm:disks}
Reconfiguration of unlabeled unit disks in a polygon (possibly with holes) is $\PSPACE$-hard.
\end{restatable}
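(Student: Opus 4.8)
The plan is to reduce from Nondeterministic Constrained Logic (NCL), the reconfiguration problem on constraint graphs that is described in \Cref{sec:mptsat} and is known to be $\PSPACE$-complete. Given an NCL instance — a plane constraint graph $H$ together with two legal edge-orientations $\alpha$ and $\beta$ — I would build a polygon $P$ with holes "on top of" a planar drawing of $H$, together with two configurations of unit disks in $P$, one encoding $\alpha$ and one encoding $\beta$, so that there is a collision-free motion of the disks inside $P$ from the first to the second if and only if $\alpha$ and $\beta$ are connected by a sequence of single-edge reversals passing only through legal orientations. As in the earlier reductions of Brocken, van der Heijden, Kostitsyna, Lo-Wong and Surtel~\cite{DBLP:conf/fun/BrockenHKLS21}, the bounded faces of $H$ become holes of $P$, which is exactly why the statement is for polygons with holes rather than simple polygons.

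The building blocks are \emph{wires} and \emph{vertex gadgets}. A wire for an edge $e$ of $H$ is a corridor barely wider than one disk, routed between the endpoints of $e$ in the drawing of $H$; it is filled with a chain of unit disks containing a single \emph{vacancy}, and the end of the corridor at which the vacancy currently sits records the orientation of $e$. Because the corridor is only wide enough for a single line of disks, the disks keep their order and the vacancy can only be shuttled from one end to the other — this is the single-edge flip. A vertex of $H$ is realized by a small chamber where the incident corridors meet; its shape, cut out together with the adjacent holes, is designed so that a vacancy can enter a neighbouring corridor only when the vacancies on the other incident corridors leave enough free area, which is where the \textsc{and}- and \textsc{or}-type vertex constraints of NCL are enforced. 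The asymmetry between weight-$1$ and weight-$2$ edges, which in the construction of~\cite{DBLP:conf/fun/BrockenHKLS21} comes from using disks of two different sizes, I would instead obtain by using a short rigid sub-chain of unit disks braced against the polygon boundary wherever a "heavy" token is needed, so that every moving object is a genuine unit disk.

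The correctness proof has two directions. Forward: any legal sequence of NCL edge-flips is simulated move by move by shuttling the corresponding vacancies, and one checks that the vertex-gadget invariant holds at every intermediate step, so the motion stays collision-free and inside $P$. Reverse: one argues that the only useful disk moves are the intended vacancy-shuttles — since every corridor is tight, disks can neither change order nor escape their corridor, so at all times the disk configuration is completely described by the locations of the vacancies, i.e.\ by an edge-orientation of $H$, and a vacancy can cross a vertex gadget only when the NCL constraint permits the corresponding flip. Hence reachability of disk configurations coincides with connectivity in the NCL configuration graph, and the reduction is complete; membership in $\PSPACE$ is routine, since the configuration space admits a polynomial-size description that can be searched in nondeterministic polynomial space and hence in $\PSPACE$ by Savitch's theorem.

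The main obstacle I expect is the design and rigorous verification of the vertex gadgets using only unit disks. Unlike axis-aligned squares, disks move continuously in two dimensions and can roll around one another whenever there is slack, so the gadget chambers must be shaped — exploiting the polygon boundary and the incident holes — tightly enough to rule out all unintended rearrangements (disks bypassing each other, a vacancy "leaking" through a gadget without the constraint being met, two disks cooperating to push past a brace) while still being loose enough that every intended vacancy-shuttle is realizable. Pinning down the geometry of these chambers, and proving that the reachable part of the disk configuration graph is isomorphic to the NCL configuration graph, is the technical heart of the proof; the wires and the global assembly along the planar drawing of $H$ then follow the template of~\cite{DBLP:conf/fun/BrockenHKLS21} with only minor changes.
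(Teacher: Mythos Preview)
Your high-level strategy --- reduce from NCL by laying a polygon-with-holes over a planar drawing of the constraint graph, with wire gadgets for edges and chamber gadgets for vertices --- is the same as the paper's, and is the right one.  Where you diverge is in the mechanism.  You propose a \emph{vacancy-in-a-chain} encoding: each edge corridor is packed with disks except for one empty slot, and the end carrying the vacancy records the orientation.  The paper instead uses a \emph{door-disk} encoding: adjacent tiles share a narrow opening containing a single disk whose two extreme positions (half in one room, half in the other) encode the orientation; four small ``hooks'' on the boundary confine this disk to a one-unit segment, and the interiors of the tiles are shaped so that the door disks of a vertex tile satisfy exactly the AND/OR constraint.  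A separate \emph{straight} tile (inserted between every pair of non-straight tiles) guarantees that at most one door disk of any room is away from its extreme position at a time (\Cref{lem:extreme}), which is what lets the paper analyse each tile in isolation.

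The paper also uses a two-stage construction you do not have: it first builds the gadgets as an \emph{arc-gon} (boundary made of segments and circular arcs of radii $1$, $2$, $3$), where the disk kinematics are clean, and only afterwards replaces each arc by a polyline with enough contact points that the extra slack (at most about $0.1$) does not invalidate the gadget analysis.  This decoupling is what makes the gadget verification tractable.

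The genuine gap in your proposal is the one you yourself flag: you do not give the vertex gadgets, and your suggested device for distinguishing weight-$1$ from weight-$2$ edges --- a ``short rigid sub-chain of unit disks braced against the polygon boundary'' --- is not specified enough to be checked.  In the paper the weight distinction is not carried by the edges at all but by the \emph{vertex} tiles: there are two AND-tile geometries and one OR-tile geometry, and in each the shape of the room (not any property of the incident wires) enforces the in-degree constraint on the door disks.  Your reverse-direction argument also needs more: inside a chamber, disks from different corridors can meet, so ``disks cannot escape their corridor'' is not automatic and must be engineered into the gadget geometry --- which is precisely the content the paper supplies.
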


Before, it was only known that reconfiguration of unlabeled disks of two different sizes was $\PSPACE$-hard, as shown by Brocken, van der Heijden, Kostitsyna, Lo{-}Wong and Surtel~\cite{DBLP:conf/fun/BrockenHKLS21}. For disks of the same size so far only positive results where known under the assumption that there is sufficient separation between the start and target positions of the disks in a simple polygon~\cite{DBLP:journals/tase/AdlerBHS15,BanyassadyBBBFH22} but also in a polygon with holes~\cite{DBLP:conf/rss/SoloveyYZH15}. Thus, our results in particular show that the separation assumption in~\cite{DBLP:conf/rss/SoloveyYZH15} is necessary if $\P \neq \PSPACE$.

In fact, our constructed polygon guarantees that no two disks may swap their positions. Therefore, hardness for the labeled version follows immediately.
\begin{restatable}{corollary}{DisksLabeled}\label{cor:disksLabeled}
Reconfiguration of labeled unit disks in a polygon (possibly with holes) is $\PSPACE$-hard.
\end{restatable}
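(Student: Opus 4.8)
The plan is to derive the labeled hardness directly from \Cref{thm:disks} by exploiting the structural property of the polygon $P$ used in its proof: inside $P$, no two unit disks can ever exchange their positions. Concretely, I would take the polygon $P$ together with the start configuration $S=\{s_1,\dots,s_n\}$ and the target configuration $T=\{t_1,\dots,t_n\}$ of disk centers produced by the reduction behind \Cref{thm:disks}, and turn it into a labeled instance by attaching to each disk a designated target.

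The key step is to observe that the non-swapping property induces a \emph{canonical} bijection $\pi\colon S\to T$. Since any continuous collision-free motion of the $n$ disks inside $P$ preserves their order along the narrow corridors of $P$, each disk is confined to a region of $P$ that contains exactly one target position, and this assignment is independent of the particular valid motion. The bijection $\pi$ can be computed in polynomial time from $P$, $S$ and $T$ by reading off the combinatorial order of disks along the corridor structure of the construction. I would then define the labeled instance by declaring the disk starting at $s_i$ to have target $\pi(s_i)$.

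It remains to check that the labeled instance is a yes-instance if and only if the unlabeled one is. One direction is immediate: a labeled reconfiguration is in particular an unlabeled one. For the converse, suppose the unlabeled instance admits a collision-free motion from $S$ to $T$; by the non-swapping property this motion carries the disk starting at $s_i$ to the unique target in its confinement region, namely $\pi(s_i)$, so the very same motion solves the labeled instance under the chosen labeling. Hence the reduction is correct and polynomial, and $\PSPACE$-hardness of labeled reconfiguration of unit disks in a polygon (possibly with holes) follows from \Cref{thm:disks}.

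The main point to nail down — routine once the construction of \Cref{thm:disks} is in hand — is the precise statement and verification of the non-swapping property: one must argue that every corridor of $P$ is everywhere too narrow for two unit disks to pass each other, so that the order type of the disks is a topological invariant of the motion, which is exactly what makes $\pi$ well defined and the two instances equivalent.
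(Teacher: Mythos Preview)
Your proposal is correct and follows essentially the same approach as the paper: the paper likewise observes that the polygon constructed in the proof of \Cref{thm:disks} is too narrow for any two disks to swap, so the induced bijection between start and target positions makes the unlabeled and labeled instances equivalent, yielding $\PSPACE$-hardness immediately. Your write-up is more explicit about the bijection $\pi$ and the two directions of the equivalence, but the underlying argument is the same.
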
 
Before, Spikaris and Yap~\cite{DBLP:journals/ipl/SpirakisY84} showed NP-hardness for labeld disks of three different sizes.

\subparagraph{Organization of the paper.}
The rest of the paper is structured as follows.
In \Cref{sec:mptsat}, we introduce NCL and use it to show that reconfiguration of \MPTSAT is $\PSPACE$-hard.
In \Cref{sec:unitsquares}, we then prove that the reconfiguration of unit squares in a simple polygon is $\PSPACE$-hard, by reducing from reconfiguration of \MPTSAT.
Finally, in \Cref{sec:unitdisks}, we show that the reconfiguration of unit disks in a polygon with holes is $\PSPACE$-hard, by a reduction from NCL.
We conclude with a discussion in \Cref{sec:discussion}.

\section{Reconfiguration of \MPTSAT}\label{sec:mptsat}

In this section, we show that the reconfiguration problem for \MPTSAT is $\PSPACE$-complete.

\monotoneSATpspace*

Containment in $\PSPACE$ follows from the fact that  \textsc{undirected $st$-connectivity} 
is in $\mathsf{LOGSPACE}$~\cite{Reingold} and the graph $G(\phi)$ of satisfying assignments is of size $2^n$ where $n$ is the number of variables in $\phi$.
To show $\PSPACE$-hardness we reduce from the 
configuration-to-configuration version of \textsc{Nondeterministic Constrained Logic} (NCL) \cite{ DBLP:conf/icalp/HearnD02,DBLP:journals/tcs/HearnD05,HearnDemaineBook}.

An NCL instance is given by an oriented planar cubic graph $G$ and another 
oriented graph $G'$, which is a reorientation of $G$.
The edges of $G$ are
colored blue or red such that every vertex is either incident to three blue edges,
or to one blue and two red edges. A feasible orientation must respect 
an indegree of at least two at every vertex, where blue edges count with multiplicity 2. 
An NCL-instance is positive, if and only if there is a sequence of edge-reversals that transforms $G$ to $G'$
while any intermediate orientation is feasible. For an example, consider \cref{fig:nclExample}.

\begin{figure}[htb]
    \centering
    \includegraphics[scale=1,page=16]{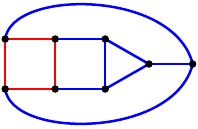}
    \caption{An instance of NCL. Can the left orientation be transformed
    to the right one by edge reversals such that all intermediate
    orientations are feasible?}
    \label{fig:nclExample}
\end{figure}

To prove \Cref{thm:monoton3SATpspace}, we first establish a correspondence between NCL instances and \PTSAT instances.
This correspondence already shows that reconfiguration of \PTSAT is $\PSPACE$-complete (which is already known, see discussion in~\cite{CARDINAL2020332}). 
Then, we show later how to adapt the construction for \MPTSAT.

\subsection {Constructing a planar {\sc \TSAT} formula}
\label{sec:make3sat}

We show that the reconfiguration problem of \MPTSAT is $\PSPACE$-complete, by describing how to create an instance of \MPTSAT from an instance $(G,G')$ of NCL.
The \MPTSAT instance consists of a formula $\phi$ and two assignments $\alpha$ and 
$\alpha'$ that satisfy $\phi$. The instance is positive if $\alpha$
and $\alpha'$ are connected in $G(\phi)$.

We first describe a mapping from $G$ to a planar \TSAT formula $\phi$ (equipped with a crossing-free embedding of the clause-variable-incidence graph in the plane) as follows.
We replace every vertex of $G$ by one or two {\em clauses} in the \TSAT formula, and every edge of $G$ by a {\em variable} in the \TSAT formula.
Specifically, for every oriented edge $e=uv$ in $G$, we introduce one variable $x_e$.
This variable
appears as $x_e$ at the clause node(s) induced by $v$
and as $\lnot x_e$ at the clause node(s) induced by $u$.
Then, a vertex of $G$ incident to three blue edges (also referred to as an {\em OR} vertex)  is replaced by a single clause for its three incident edges. A vertex of $G$ incident to a blue and two red edges (also referred to as an {\em AND} vertex) is replaced by two clauses, each for one of the red edges together with the blue edge, see \Cref{fig:3satreduction} for an illustration.

\begin{figure}[htb]
    \centering
    \includegraphics[scale=1]{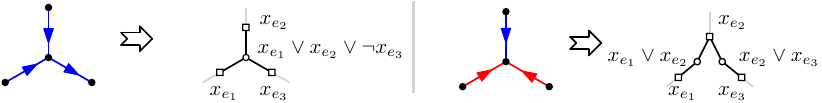}
    \caption{Mapping an NCL-Instance to an embedding of a planar \TSAT formula. Variable nodes are drawn as squares, clause
    nodes are drawn as disks.}
    \label{fig:3satreduction}
\end{figure}

The resulting \TSAT instance comes with a natural embedding based on the embedding of $G$. \Cref {fig:ncl-reduction} illustrates the construction for an example.

\begin{figure}[htb]
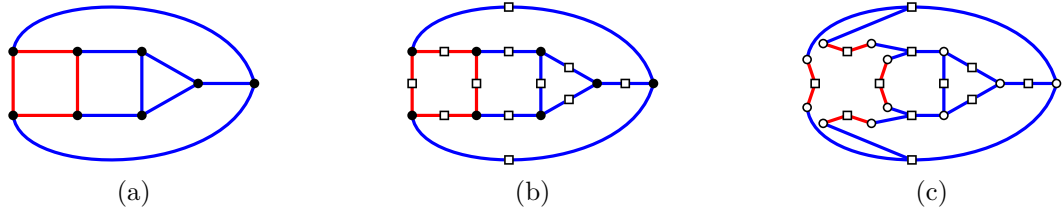

    \centering
    \begin{subfigure}{.25\textwidth}
        \centering
         \includegraphics[page=1]{figures/monotonePlanar3sat.pdf}  
         (a)
    \end{subfigure}\hfill
    \begin{subfigure}{.25\textwidth}
        \centering
         \includegraphics[page=2]{figures/monotonePlanar3sat.pdf}  
         (b)
    \end{subfigure}\hfill
    \begin{subfigure}{.25\textwidth}
        \centering
         \includegraphics[page=7]{figures/monotonePlanar3sat.pdf}   
         (c)
    \end{subfigure}
    \caption{Graph $G$ of an NCL instance and corresponding embedded \TSAT formula $\phi$. (a) NCL instance $G$. (b) We place a variable node in the middle of each edge of $G$. (c) We replace each vertex of $G$ by one or two clause nodes, and connect them to their incident variable nodes appropriately.}
    \label{fig:ncl-reduction}
\end{figure}

The construction so far immediately provides us with a satisfying assignment for the \TSAT formula: if we set all variables to {\tt true}, then each clause must be satisfied because we started with a feasible orientation of $G$. For the {\em OR} vertices of $G$, this means at least one edge was oriented towards the vertex (and thus the corresponding clause has at least one positive literal). For the {\em AND} vertices of $G$, this means that either the blue edge or {\em both} red edges were oriented towards the vertex (and thus {\em both} corresponding clauses have at least one positive literal).

The satisfying assignments of $\phi$ and the feasible orientations of $G$ are in 1-1 correspondence.
The truth assignment of the formula encodes which edges of $G$ are reversed ({\tt true} variables correspond to edges that keep their initial orientation, and {\tt false} variables are the reversed edges). 
Thus, if $G_1$ and $G_2$ are two feasible orientations of $G$ that differ by only one
edge, then the associated satisfying truth assignments differ only in one variable.

\subsection {Monotone planar {\sc \TSAT}}

The construction from Section~\ref {sec:make3sat} results in a planar \TSAT instance, which is not necessarily {\em monotone}.
Our goal now is to create a planar monotone formula $\phi$ and two  truth assignments $\alpha$ and 
$\alpha'$. Planar monotone means that all literals in a clause should have the same sign (all positive or all negative).  In addition, there has to be a closed curve that passes through all variable nodes, such that it separates the positive from the negative clause nodes, and it does not cross any edges of the graph. 
We will show this in the following proof of \cref {thm:monoton3SATpspace}.

\begin{proof} [Proof of Theorem~\ref {thm:monoton3SATpspace}]
Consider an instance $(G, G')$ of NCL. For an illustration consider \cref{fig:ncl-reduction'}.  We begin with subdividing all edges of $G$ and obtain a new graph $\hat G$, i.e., we replace every edge $e=uv$ of $G$ by edges $uw$ and $wv$.
Note that although $\hat G$ looks the same as our layout for $G$ with additional edge vertices (in \cref {fig:ncl-reduction} (b)), the additional vertices of $\hat G$ have a different function.

\begin{figure}[htb]
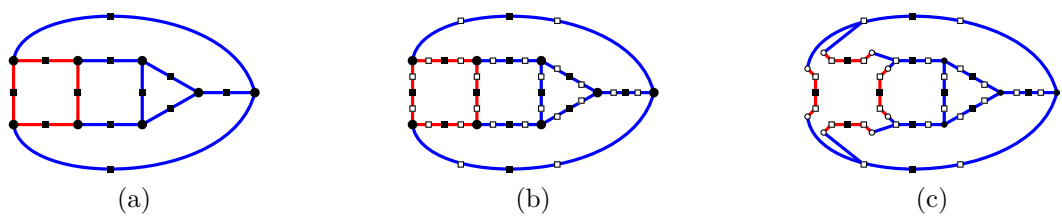

    \centering
    \begin{subfigure}{.25\textwidth}
        \centering
         \includegraphics[page=8]{figures/monotonePlanar3sat.pdf}  
         (a)
    \end{subfigure}\hfill
    \begin{subfigure}{.25\textwidth}
        \centering
         \includegraphics[page=9]{figures/monotonePlanar3sat.pdf} 
         (b)
    \end{subfigure}\hfill
    \begin{subfigure}{.25\textwidth}
        \centering
         \includegraphics[page=10]{figures/monotonePlanar3sat.pdf}   
         (c)
    \end{subfigure}
    \caption{Constructing an embedded monotone \TSAT formula $\phi$ from a graph $G$ of an NCL instance. (a) The NCL instance $G$ is adapted to $\hat G$. (b) We place a variable node in the middle of each edge of $\hat G$. (c) We replace each vertex of $\hat G$ by one or two clause nodes, and connect them to their incident variable nodes. }
    \label{fig:ncl-reduction'}
\end{figure}

We now apply the construction from Section~\ref{sec:make3sat} to $\hat G$ rather than $G$: For every (directed)
edge $e=uv$ in $G$ we substitute
$u$ and $v$ in $\hat G$ as before, but the variable node at $u$ is now called $x_{e-}$
and the (now distinct) variable node at $v$ is now called $x_{e+}$.
The vertex of $\hat G$ that was introduced by subdividing $e$ will be associated with a 
clause node $(\lnot x_{e+} \lor \lnot x_{e-})$. In the clauses defined by
the original nodes of $G$ all variables are nonnegated (refer to Figure~\ref {fig:ncl-reduction'}).

A correspondence between satisfying truth assignment of $\phi$ and 
orientations of $G$ can be derived as follows:
Assume we have a feasible orientation $\tilde G$ of $G$.
If the orientation of an edge $e$ is
the same in $G$ and $\tilde G$, we set $x_{e-}=\tt{false}$ and $x_{e+}=\tt{true}$. 
Otherwise we set $x_{e-}=\tt{true}$ and $x_{e+}=\tt{false}$. This implies that
a variable $x_{e\pm}$ is ${\tt true}$ if in the derived orientation the edge $e$ points
towards the node that will be the clause-node of $x_{e\pm}$ with the three positive literals.
The assignment satisfies all positive clauses due to the reasons given in ~\Cref{sec:make3sat}.
For every clauses $(\lnot x_{e+} \lor \lnot x_{e-})$ ( coming from a subdivision point) exactly on variable is $\tt true$ and 
hence these clauses are also satisfied.
We define $\alpha$
with respect to $G$ and $\alpha'$ with respect to $G'$ according to these rules.

Using the above correspondence we can also obtain a feasible 
orientation of $G$ from a satisfying 
truth assignment of $\phi$ as follows.
If for an edge $e$ we have 
 $x_{e-}=\tt{false}$ and $x_{e+}=\tt{true}$ then we use for $e$ the same orientation
 as in $G$. If however, $x_{e+}=\tt{false}$ and $x_{e-}=\tt{true}$ then we use $e$ 
 in the reversed orientation of $G$. So far this represents our correspondence from the 
 previous paragraph.
The only other case can be  that $x_{e+}=\tt{false}$ and $x_{e-}=\tt{false}$
in a satisfying assignment for $\phi$. In this case the edge $e$ would not
contribute to the indegree constraint at neither $u$ nor $v$ in the reorientation of $G$. Thus,
we can orient $e$ arbitrarily, say in the original orientation. 

We argue now that the reduction is correct. Assume the we have a positive 
instance of NCL, in particular there is a sequence of edge reversals that 
will bring $G$ to $G'$ while staying feasible. We derive an assignment for $\phi$ for every
step as discussed above. If we reverse an edge $e$ we only need 
to exchange the values between
 $x_{e+}$ and $x_{e-}$. Assume  $x_{e+}={\tt true}$ and $x_{e-}={\tt false}$
 (the case 
 $x_{e+}={\tt false}$ and $x_{e-}={\tt true}$ is handled similarly). We will set first $x_{e+}= x_{e-} = {\tt false}$ first, and then  $x_{e+}={\tt false}$ and $x_{e-}={\tt true}$. 
 By this we only change one variable at a time while staying satisfying.
 Since the
 corresponding assignment of $G$ is $\alpha$ and the corresponding assignment of $G'$ is 
 $\alpha'$ we have that $\alpha$ and $\alpha'$ are connected in $G(\phi)$.

For the other direction assume that $\alpha$ and $\alpha'$ are connected in $G(\phi)$
by a sequence of satisfying assignments. Every step in this sequence changes one variable value.
Using the established correspondence this yields the reversal of a single edge or no 
edge is reversed. Thus, the sequence of corresponding assignments brings $G$ to $G'$
through a sequence of edge reversals and all intermediate orientations are feasible.

\begin{figure}[htb]
    \centering
    \begin{subfigure}{.25\textwidth}
        \centering
         \includegraphics[page=1]{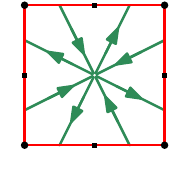}  
         (a)
    \end{subfigure}\hfill
    \begin{subfigure}{.25\textwidth}
        \centering
         \includegraphics[page=2]{figures/euler.pdf}  
         (b)
    \end{subfigure}\hfill
        \begin{subfigure}{.25\textwidth}
        \centering
         \includegraphics[page=3]{figures/euler.pdf}  
       (c)
    \end{subfigure}\hfill
        \begin{subfigure}{.25\textwidth}
        \centering
         \includegraphics[page=4]{figures/euler.pdf}  
        (d)
    \end{subfigure}\hfill
    \caption{Obtaining a suitable closed curve intersecting all edges in $G'$ -- local situation around a vertex. (a) The
    initial orientation. (b) Decomposition into closed curves by shortcutting. (c--d) Gluing curves together.}
    \label{fig:euler}
\end{figure}

Finally, we have to argue that there exists a closed curve passing through the variable vertices 
that separates the positive from the negative clause nodes.
Consider the dual graph of $\hat G$ according to
the embedding of $\phi$.
Let $uv$ be an edge of $\hat G$ with $v$ be a vertex from $G$ and $u$ a vertex that was introduced by subdividing an 
edge of $G$. Note that in our construction $u$ will be replaced
with an all-negative clause node and $v$ with an all-positive clause node.
Let the dual edge of $uv$ be $xy$. We direct
$xy$ such that when traversing from $x$ to $y$ we cross
$uv$ having $u$ on the left and $v$ on the right; see Fig.~\ref{fig:euler}(a).
By construction the vertices in the dual graph have
alternating in- and outgoing edges. We can decompose this graph by a set 
of oriented cycles by \enquote{linking} every ingoing edge at a vertex with its clockwise
following (outgoing) edge. By short-cutting these cycles every time they pass a vertex,
we get a set of disjoint closed curves inheriting the orientation of the dual edges; see Fig.~\ref{fig:euler}(b). When two
such curves meet around a vertex, we can glue them together to one curve respecting the orientation; see Fig.~\ref{fig:euler}(c--d). 
By repeating this
process we obtain a single curve.  In the \MPTSAT instance we can route the curve through 
the variable nodes whenever it would cross an edge in $\hat G$. By our choice of the
orientation of the dual graph
all-positive clause nodes will lie on the right 
and the all-negative clause nodes on the left side of the curve. \cref{fig:example} illustrates the construction of the curve for an example.
\begin{figure}[htb]
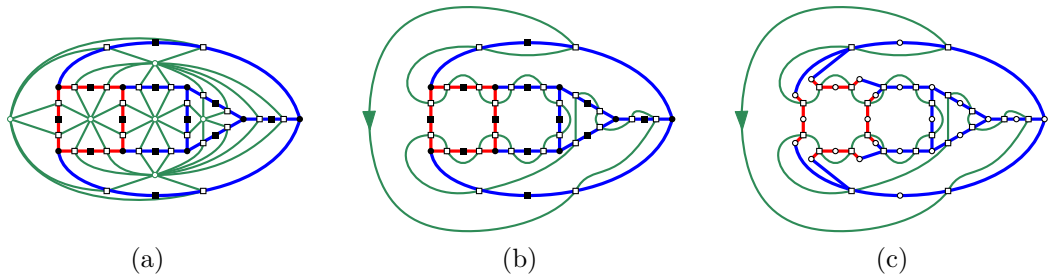

    \centering
    \begin{subfigure}{.3\textwidth}
        \centering
        \includegraphics[page=13]{figures/monotonePlanar3sat.pdf}  (a)
    \end{subfigure}\hfill
    \begin{subfigure}{.3\textwidth}
        \centering
         \includegraphics[page=14]{figures/monotonePlanar3sat.pdf}  (b)
    \end{subfigure}\hfill
    \begin{subfigure}{.3\textwidth}
        \centering
         \includegraphics[page=15]{figures/monotonePlanar3sat.pdf}   (c)
    \end{subfigure}
    \caption{Finding a separating closed curve for the instance of \MPTSAT by constructing a Euler tour in the dual graph of $G'$. (a) The dual graph of $G'$. (b) A Euler tour in the dual graph. (c) The curve following the Euler tour in the final embedded \MPTSAT instance.}
    \label{fig:example}
\end{figure}
\end{proof}

\section{Unit squares in simple polygons}\label{sec:unitsquares}

In this section, we prove \Cref{thm:motionplanningNP}
by reducing \MPTSAT to the 
unlabeled reconfiguration of squares in simple grid polygons.
We start with a \MPTSAT formula $\phi$ and two satisfying assignments $\sigma_0$ and $\sigma_1$ of $\phi$.
In our reduction, we create a grid polygon $P$ and use the assignments $\sigma_0$ and $\sigma_1$ to define two configurations $c_0$ and $c_1$ of $8\times 8$ squares in $P$. 
We then show that there is a reconfiguration from $c_0$ to $c_1$ in $P$ if and only if there is a reconfiguration from $\sigma_0$ to $\sigma_1$ in~$\phi$.
The squares are axis-aligned and may move by translations only.

\subsection{Reference centers}

As a preliminary step we provide a helpful tool for verifying the correctness of the reduction.
For every $y\in\mathbb Z$, we fix a number $b_y\in [0,8)$.
Consider a grid polygon $P$ with the property that for all $x,y\in\mathbb Z$, it either holds that the square $S_{xy}=[8x+b_y,8x+b_y+4]\times [8y,8y+4]$ is contained in $P$ or that $S_{xy}$ is interior-disjoint with $P$, see also \cref{fig:refcenter}.
When there exists such values $b_y\in [0,8)$, we say that $P$ has \emph{reference centers}, and the squares $S_{xy}$ contained in $P$ are the \emph{reference centers} of $P$.
Moreover, we call a configuration  of $8\times 8$ squares in $P$  \emph{perfect} if the number of squares and reference centers match.

\begin{figure}[htb]
    \centering
    \includegraphics[page=5,width=0.5\linewidth]{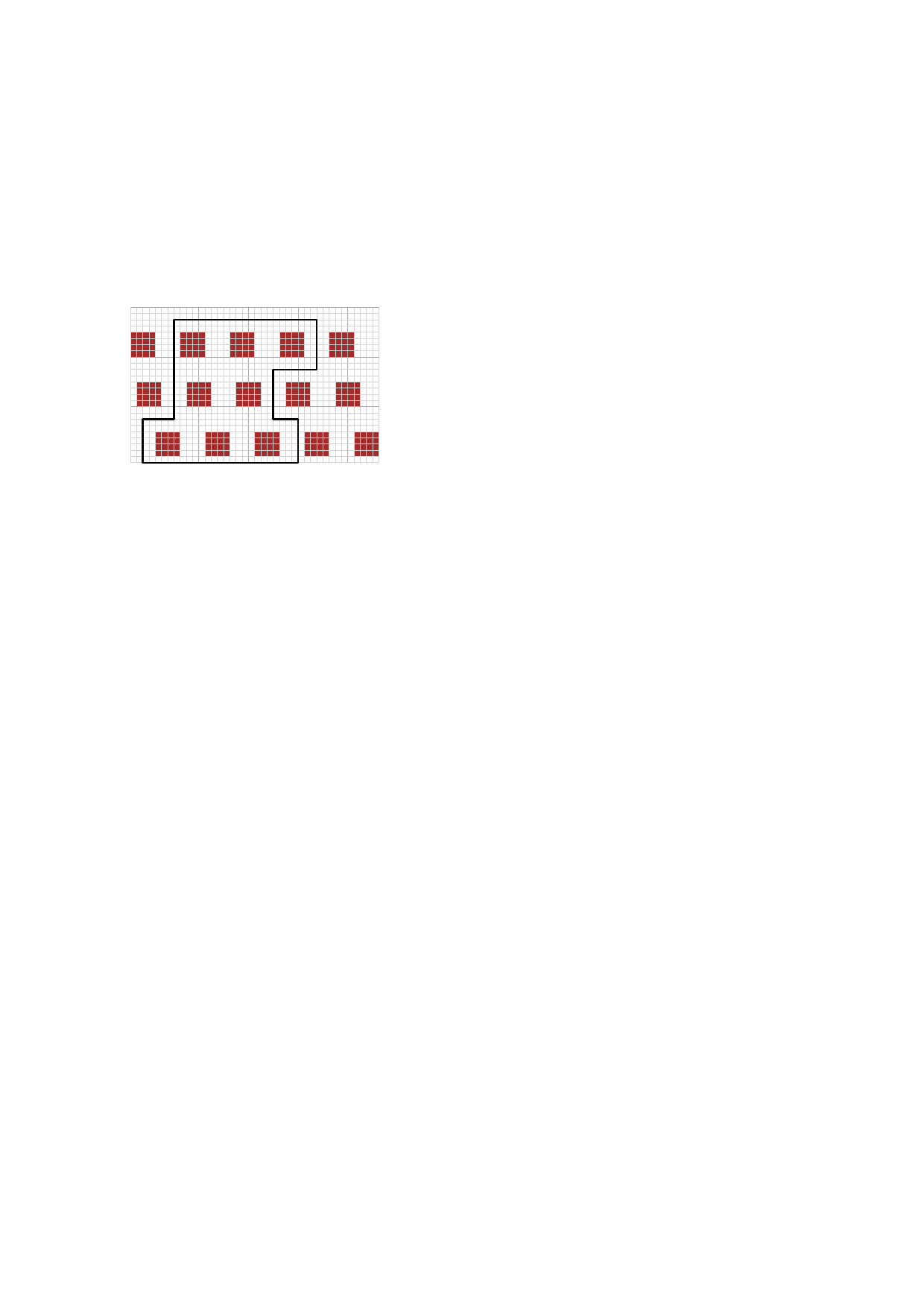}
    \caption{A grid polygon that has reference centers  for $b_0=1, b_1=6, b_2=5$ and an $8\times 8$ square intersecting four reference centers; the area of intersection is 16.}
    \label{fig:refcenter}
\end{figure}

\begin{lemma}\label{lem:area4}
Let $P$ be a grid polygon that has reference centers and consider an $8\times 8$ square~$S$ placed in $P$.
Then the area of the intersection of $S$ with the reference centers is $16$.
\end{lemma}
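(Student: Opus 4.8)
The plan is to exploit the key structural fact that the reference centers $S_{xy}$ form a grid-aligned family of $4\times 4$ squares whose horizontal offset $b_y$ depends only on the row index $y$, and whose vertical placement is rigidly at $[8y,8y+4]$. Fix an $8\times 8$ square $S=[a,a+8]\times[c,c+8]$ placed inside $P$. The idea is to compute the contribution of each grid row of reference centers to the area $|S\cap\bigcup_{xy}S_{xy}|$ separately, and show each row contributes a total that, when summed, gives exactly $16$.

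First I would analyze the vertical direction. The reference centers live in horizontal bands $[8y,8y+4]$ for $y\in\mathbb Z$, and consecutive bands are separated by a gap $[8y+4,8y+8]$ of height $4$. Since $S$ has height $8$, the interval $[c,c+8]$ meets the union of bands $\bigcup_y[8y,8y+4]$ in a set of total length exactly $4$: writing $c = 8k + r$ with $0\le r<8$, if $r\le 4$ then $[c,c+8]$ covers $[c,8k+4]$ (length $4-r$) and $[8k+8,8k+8+r]=[8(k+1),8(k+1)+r]$ (length $r$), summing to $4$; if $r>4$ the interval covers $[8(k+1),8(k+1)+4]$ entirely minus nothing on the low side, again giving length $4$ after the easy case check. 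So in all cases the vertical "footprint" of $S$ inside the bands has measure $4$, and it meets at most two bands.

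Next I would analyze the horizontal direction within a single band, say row $y$. Inside band $[8y,8y+4]$ the reference centers contained in $P$ are some subset of the squares $[8x+b_y,8x+b_y+4]\times[8y,8y+4]$, $x\in\mathbb Z$; crucially, any two of these are horizontally translates by multiples of $8$, so they tile the strip with period $8$ leaving gaps of width $4$. But here I must be careful: not every $S_{xy}$ is in $P$ — only those with $S_{xy}\subseteq P$. However, $S\subseteq P$, and $S$ restricted to band $y$ is a rectangle of width $8$; I claim that whenever $S$ overlaps a would-be reference center position $[8x+b_y,8x+b_y+4]\times[8y,8y+4]$, that square is in fact contained in $P$. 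This is the crux and the main obstacle: I need that $S\subseteq P$ forces every reference-center position it touches to lie in $P$. This should follow from the defining dichotomy of reference centers (each $S_{xy}$ is either $\subseteq P$ or interior-disjoint from $P$): if $S_{xy}$ were interior-disjoint from $P$ yet had interior overlap with $S\subseteq P$, we would get a contradiction, since overlap of $S$ with $S_{xy}$ is overlap of $P$ with $S_{xy}$. Hence every reference-center slot met by $S$ is genuinely a reference center of $P$.

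Granting that, the horizontal interval $[a,a+8]$ meets the union $\bigcup_x[8x+b_y,8x+b_y+4]$ in a set of total length exactly $4$, by the same period-$8$, width-$4$, height-of-$S$-equals-$8$ computation as in the vertical case. Therefore, by Fubini, the area of $S\cap(\text{reference centers in band }y)$ equals (horizontal overlap length in band $y$) times (vertical overlap length in band $y$). Summing over the (at most two) bands $y$ that $S$ meets: the vertical overlap lengths across bands sum to $4$, and in each band the horizontal overlap length is $4$, so I would finish by observing
\[
|S\cap\textstyle\bigcup_{xy}S_{xy}| \;=\; \sum_{y}\,4\cdot\ell_y \;=\; 4\sum_y \ell_y \;=\; 4\cdot 4 \;=\;16,
\]
where $\ell_y$ is the vertical overlap length of $[c,c+8]$ with band $y$. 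The one subtlety to double-check is that the horizontal overlap length is $4$ in \emph{each} band separately — it is, because it only depends on $a$ and $b_y$ and the fixed geometry, and the band is nonempty in $S$ exactly when $\ell_y>0$; for bands with $\ell_y=0$ the term vanishes anyway, so it does not matter. I expect the "touched slot is a real reference center" step to be the only place needing real care; everything else is the two one-dimensional period-$8$/width-$4$ length computations and Fubini.
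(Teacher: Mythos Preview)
Your argument is correct and follows essentially the same route as the paper: both proofs exploit that in each coordinate direction the reference centers form a period-$8$, width-$4$ pattern, so an interval of length $8$ always overlaps the pattern in total length $4$, and multiplying the two contributions gives area $16$. The paper phrases this via counting unit grid cells (four ``red'' rows, four red cells per eight consecutive cells in each red row), while you carry out the same computation continuously via Fubini; your explicit invocation of the containment dichotomy to ensure that every reference-center slot touched by $S$ is actually a reference center of $P$ is a point the paper leaves implicit.
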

\begin{proof}
Assume that the grid cells of the reference centers are colored red. 
Consider every infinite horizontal line $L_i=[-\infty,i] \times [\infty,i+1]$ of grid cells.
We call $L_i$ \emph{red} if it contains some red cells. When going through theses lines
in increasing order four red lines will be always followed by 4 non-red lines and vice versa. 
Thus, every $8\times 8$ square in $P$ intersects exactly 4 red lines. 
Note that any sequence of 8 horizontally consecutive cells in a red line
contains exactly 4 red cells. Therefore, every $8\times 8$ square intersects exactly 16
red cells.
\end{proof}

The next lemma allows us to assign one square in a perfect configuration to each reference center.

\begin{lemma}
\label{lem:refCenters}
Let $P$ be a grid polygon (possibly with holes) that has reference centers and consider a perfect configuration of $8\times 8$ squares in $P$.
Then each square in the configuration contains one reference center and is interior-disjoint from all other reference centers.
\end{lemma}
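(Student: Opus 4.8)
The plan is to combine \Cref{lem:area4} with a simple counting argument. First I would recall the setup: $P$ has reference centers $S_{xy}$, and a \emph{perfect} configuration has exactly as many $8\times 8$ squares as there are reference centers, say $k$ of each. The key quantitative fact from \Cref{lem:area4} is that \emph{every} $8\times 8$ square $S$ placed in $P$ intersects the union of the reference centers in area exactly $16$; since each reference center is a $4\times 4$ square (area $16$), this says each placed square ``uses up'' an amount of reference-center area equal to exactly one full reference center.

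\medskip

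Next I would set up the double count. For a placed square $S$ in the configuration and a reference center $R=S_{xy}$, let $a(S,R)=\operatorname{area}(S\cap R)\ge 0$. By \Cref{lem:area4}, $\sum_R a(S,R)=16$ for each of the $k$ placed squares $S$, so $\sum_{S}\sum_{R} a(S,R)=16k$. On the other hand, the placed squares of the configuration are interior-disjoint (it is a valid configuration), so for a fixed reference center $R$ we have $\sum_{S} a(S,R)=\operatorname{area}\!\big(R\cap\bigcup_S S\big)\le \operatorname{area}(R)=16$. Summing over the $k$ reference centers gives $\sum_R\sum_S a(S,R)\le 16k$. Since the two expressions for $\sum_{S,R} a(S,R)$ agree, every inequality $\sum_S a(S,R)\le 16$ must be tight, i.e.\ the placed squares together cover each reference center $R$ entirely (up to measure zero).

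\medskip

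Then I would upgrade this ``covered in total'' statement to the desired ``one square per reference center'' statement. Fix a reference center $R$. Because the placed squares are interior-disjoint and axis-aligned, and $R$ is a $4\times 4$ axis-aligned square, consider the center point $p$ of $R$. The squares covering $R$ tile it (up to measure zero) by axis-aligned rectangles; in particular the point $p$ lies in the closure of some placed square $S$, and a neighborhood of $p$ inside $R$ is covered by finitely many placed squares meeting near $p$. I would argue that in fact a single placed square must contain all of $R$: if two distinct interior-disjoint axis-aligned squares of side $\ge 8$ both met the interior of the $4\times 4$ square $R$, their separating axis-aligned line would split $R$ into two parts, each of area $<16$, but then — tracing through which cells of $P$ the reference center $R$ occupies and using that any $8\times 8$ square meeting $R$ must actually contain $R$ because $R$ sits on a full $4\times 4$ block of red cells spaced $8$ apart — we get a contradiction with the covering being exact. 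More cleanly: since the total coverage of $R$ by the $k$ placed squares is the full area $16$ and the squares are interior-disjoint, and since (as in the proof of \Cref{lem:area4}) the red cells come in blocks of $4$ consecutive rows and within a red row in blocks of $4$ consecutive columns, any placed square meeting the interior of $R$ must contain the whole $4\times4$ block that is $R$; hence that one square already contributes area $16$, forcing all other squares to contribute $0$, i.e.\ to be interior-disjoint from $R$. This assigns to each reference center exactly one square; and since there are $k$ reference centers and $k$ squares, each getting a distinct square would follow, but even without distinctness: a single square $S$ intersects the reference centers in total area $16$ by \Cref{lem:area4}, so it can contain at most one of them, giving a bijection.

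\medskip

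The main obstacle I anticipate is the last step — passing from ``the reference centers are collectively covered'' to ``each placed square contains exactly one reference center and is interior-disjoint from the rest.'' The clean way is the observation that, by the block structure of the red cells established inside the proof of \Cref{lem:area4} (four consecutive red rows, then within such a row four consecutive red columns), an $8\times 8$ axis-aligned square that meets the interior of a reference center $R$ must in fact contain all of $R$: a side-$8$ square cannot partially overlap a $4\times4$ red block whose neighboring red blocks are $8$ apart without fully covering it. Granting that, interior-disjointness of the placed squares immediately forbids two of them from both containing (or even both touching) the same $R$, and \Cref{lem:area4} forbids one square from containing two reference centers; combined with the equality case of the double count, this yields the lemma. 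I would present the double-count equality first to guarantee ``no reference center is missed,'' then invoke the block-containment observation to get the rest.
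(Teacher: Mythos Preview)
Your double-counting argument (steps 1--3) is correct and is exactly how the paper also establishes that every reference center is fully covered by the union of the placed squares. The gap is in your step~4. The claim ``any $8\times 8$ axis-aligned square that meets the interior of a reference center $R$ must contain all of $R$'' is false. Two horizontally adjacent reference centers in the same row $y$ are $[8x+b_y,8x+b_y+4]\times[8y,8y+4]$ and $[8(x{+}1)+b_y,8(x{+}1)+b_y+4]\times[8y,8y+4]$, separated by a gap of width~$4$; an $8\times 8$ square shifted by $2$ units horizontally (and similarly vertically) overlaps up to four reference centers, each in a $2\times 2$ patch. This is precisely the situation depicted in \Cref{fig:refcenter}. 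So a placed square can meet the interior of $R$ while covering only part of it, and your ``block-containment'' observation does not hold; the argument collapses at exactly the point you flagged as the obstacle.

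The paper resolves this with an extremal argument rather than a local containment claim. After the same area count shows full coverage, it picks the square $S$ whose center has minimum $x{+}y$. If $S$ did not contain a reference center, then by \Cref{lem:area4} some edge of $S$ would cut through the interior of a reference center, and the remaining part of that reference center would have to be covered by another square---impossible on the left or bottom side by minimality of $S$. But the period-$8$ layout forces that whenever the top (resp.\ right) edge of $S$ slices a reference center, the bottom (resp.\ left) edge slices one too. Hence $S$ contains a reference center; remove $S$ from $P$ and finish by induction on the number of reference centers. To repair your approach you would need exactly this kind of extremal/periodicity argument in place of the incorrect containment claim.
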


\begin{proof}
We proceed by induction on $k$, the number of reference centers.
In the base case $k=1$, there is one square in the perfect configuration, and it contains the unique reference center by \Cref{lem:area4}.

For the induction step, suppose that the claim holds for polygons with at most $k$ reference centers Let $P$ be a grid polygon with $k+1$ reference centers and consider a perfect configuration in $P$.
Because the reference centers have a total area of $16(k+1)$ and the area of reference centers covered by each square in the configuration is $16$ by \Cref{lem:area4}, together the squares  cover all reference centers.
Let $S$ be the square in the configuration with a center where the sum of the $x$- and $y$-coordinates is minimum.
If $S$ does not contain a reference center, then an edge of $S$ cuts through the interior of a reference center by \Cref{lem:area4}, and the remaining part of the reference center must be covered by another square.
This cannot be the left or bottom edge of $S$, as that would contradict the minimality of $S$.
But if the top or right edge cuts through a reference center, it follows that the bottom or left edge, respectively, also does, due to the pattern in which the reference centers are placed.
Hence, $S$ contains a reference center and is interior-disjoint from the other reference centers by \Cref{lem:area4}.

The set $P'=P\setminus S$ consists of  one or more polygons with $k$ reference centers in total.
Note that the removal of $S$ might also yield a hole in $P'$.
The remaining squares (after removing $S$) constitute perfect configurations in the polygons of $P'$. Hence, by induction, each square contains a reference center and is interior-disjoint from the other reference centers.
Therefore, the claim holds for all squares in the configuration of $P$.
\end{proof}

By \cref{lem:refCenters}, in any reconfiguration from one perfect configuration to another  each square contains the same reference center the whole time, i.e., the squares  merely shift slightly sideways and up and down around the same reference center.
It turns out that in order to describe the mechanics of the intended reconfigurations of the squares in $P$ that correspond to reconfigurations of the formula $\phi$, we only need to consider the positions with respect to their
reference centers that are either extreme or centered in each direction.
This leaves us with the nine possible positions shown in \Cref{fig:colorcodesA}.
In \Cref{sec:verification}, we prove that also arbitrary continuous reconfigurations in $P$ correspond to reconfigurations of $\phi$, but for understanding the key ideas of the reduction, it suffices to consider these nine discrete positions.

\begin{figure}[htb]
\centering   
    \begin{subfigure}{.26\textwidth}
        \centering
         \includegraphics[page=19]{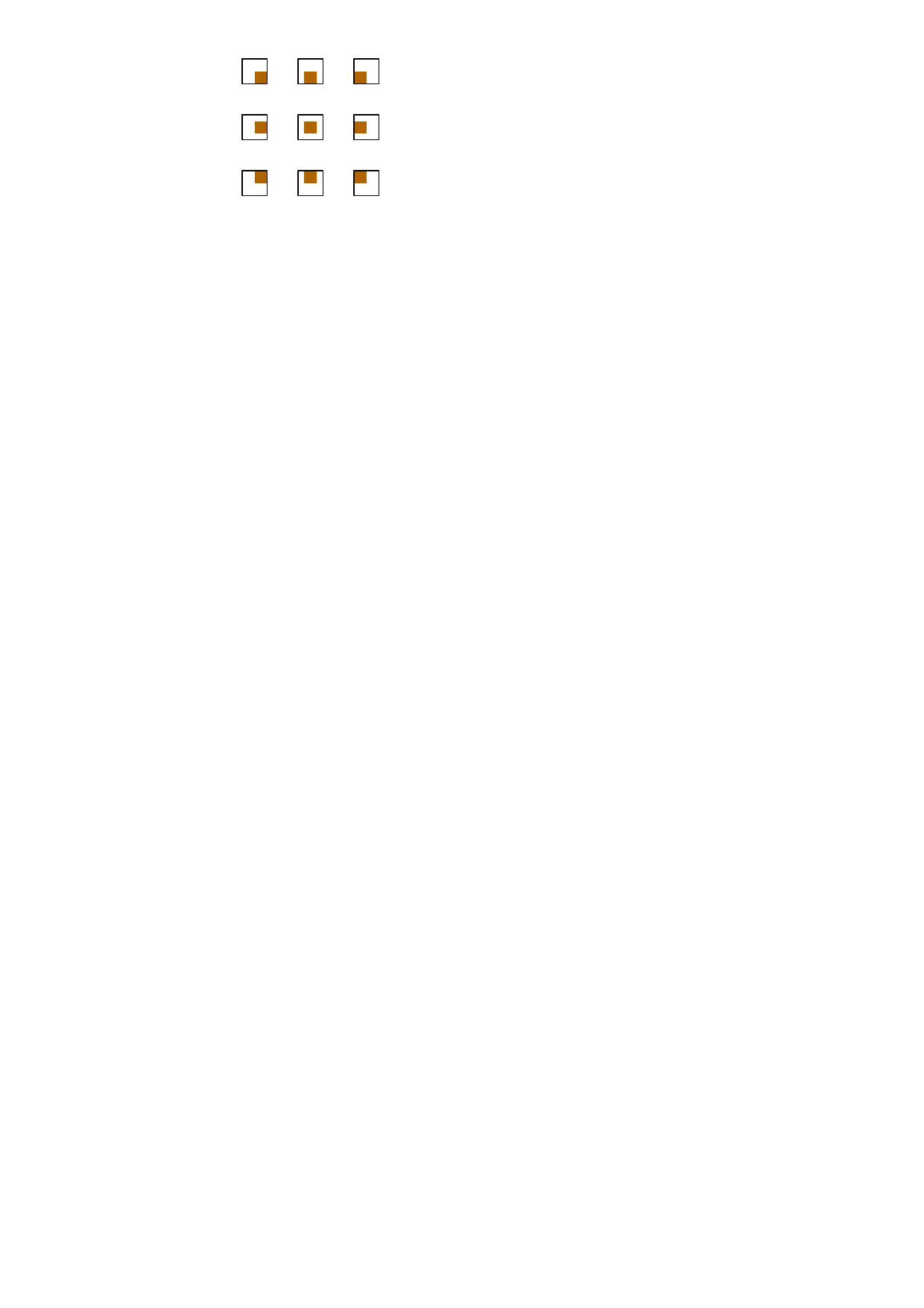}   
        \subcaption{}
        \label{fig:colorcodesA}
    \end{subfigure}\hfil
        \begin{subfigure}{.26\textwidth}
        \centering
         \includegraphics[page=20]{figures/Reconfiguration-2.pdf}   
       \subcaption{}
       \label{fig:colorcodesB}
    \end{subfigure}
\caption{(a) The up to nine important placements of a square for a given reference center.\\
(b) Examples of colored reference centers and how the squares containing them may move.} 
\label{fig:colorcodes}
\end{figure}

When the top (resp.~bottom) edge of a square aligns with that of its reference center, we say that the square is \emph{down} (resp.~\emph{up}).
Otherwise, the midpoints of the reference center and the square have the same $y$-coordinate, in which case we say that the square is \emph{vemid} (abbreviation of vertically in the middle).
When the left (resp.~right) edge of the square aligns with that of the reference center, the square is \emph{right} (resp.~\emph{left}).
Otherwise, the midpoints of the reference center and the square have the same $x$-coordinate, in which case we say that the square is \emph{homid} (abbreviation of horizontally in the middle).

For most squares, several of the nine possible positions can be excluded, for instance due to being pushed by the boundary of $P$ or another square.
In our figures, we use a color scheme for the reference centers, to make the reader aware of what possible positions a square has.
The colors have the following meanings: the square can not move (\emph{pink}),
only move vertically (\emph{orange}),
 only move horizontally (\emph{green}),
 move vertically and horizontally (\emph{brown}). 
There might exist additional restrictions on the positions that are not indicated by the colors.

\subsection{Schematics of the construction}
On a big scale, our construction follows the ideas of Abrahamsen and Stade~\cite{DBLP:journals/corr/abs-2404-09835}. Therefore, we start by reviewing the high level ideas.
Let $\phi$ be an instance of \MPTSAT\ with variables $x_1,\ldots,x_k$ and consider the variable-clause-incidence graph $G$ with a suitable embedding.
We now describe schematically the construction of a grid polygon $P$.
Each feasible solution to $\phi$ corresponds to a configuration of $8\times 8$ squares in $P$, and there is a reconfiguration between two reconfigurations if and only if there is a reconfiguration between the two solutions of $\phi$.

In the first step, we introduce a \emph{\segRepr} that consists of horizontal segments for vertices and clauses and of vertical segments for the edges of $G$, see in \Cref{fig:schematicsB}. The idea behind the \segRepr is as follows: The horizontal segments will correspond to one or several rows of squares which transmit information horizontally. Similarly, the vertical segments will correspond to columns of squares that transmit information vertically. Moreover, the final polygon roughly corresponds to the smallest enclosing orthogonal polygon.

\begin{figure}[htb]
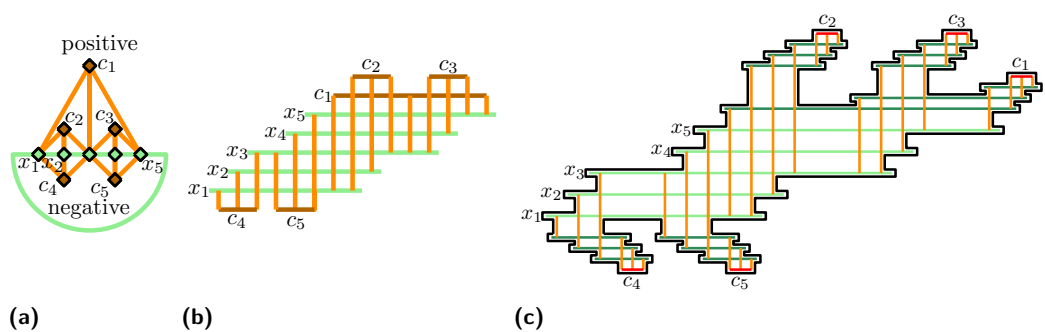

\centering
    \begin{subfigure}{.15\textwidth}
        \centering
         \includegraphics[page=21,scale=.85]{figures/Reconfiguration-2.pdf}
         \subcaption{}
         \label{fig:schematicsA}
    \end{subfigure}\hfil
    \begin{subfigure}{.3\textwidth}
        \centering
         \includegraphics[page=22,scale=.85]{figures/Reconfiguration-2.pdf}
         \subcaption{}
         \label{fig:schematicsB}
    \end{subfigure}\hfil
       \begin{subfigure}{.5\textwidth}
        \centering
         \includegraphics[page=23,scale=.8]{figures/Reconfiguration-2.pdf}
         \subcaption{}
         \label{fig:schematicsC}
    \end{subfigure}
\caption{Schematic construction for transforming an instance of \MPTSAT to a polygon. (a) a variable-clause incidence graph. (b) first step for constructing the \segRepr of $G$ (c) second step for constructing the \segRepr of $G$: introducing the auxiliary variable rows and OR gadgets.}
\label{fig:schematics}
\end{figure}

To construct the \segRepr, we introduce a horizontal segment for each variable $x_1,\ldots,x_k$ in this order bottom-up, where the $x$-coordinates of the left and right endpoints are increasing, see \cref{fig:schematics}.
Furthermore, the right endpoint of $x_1$ is to the right of the left endpoint of $x_k$, so that all the segments have a common horizontal overlap.
These are the \emph{primary} variable rows.
The positive clauses are represented by 
horizontal segments above $x_k$ and the negative clauses by horizontal segments  below $x_1$. We say that a clause $c'$
is \emph{nested} in a clause $c$, if in 
$G$ the clause node of $c'$ lies inside a cycle defined
by two incident edges of the node of $c$ closed with a path between variable nodes. 
For instance, in \Cref{fig:schematicsA}, $c_2$ and $c_3$ are nested inside $c_1$. Note that the property of being nested 
defines a partial order.
If a positive clause $c_i$ is nested inside $c_j$, then we place the segment of $c_i$ above that of $c_j$ in the \segRepr.
If instead the clauses are negative, the segment of $c_i$ is below that of $c_j$.
Each edge $x_ic_j$ of $G$ is realized by a vertical segment connecting the segments of $x_i$ and $c_j$.
All vertical segments to the positive clauses are placed to the right of all those to the negative clauses.
The left-to-right ordering of the edges to the positive clauses along the variables $x_1,\ldots,x_k$ in $G$ is preserved by the corresponding vertical segments, as well as the ordering of the edges to the negative clauses.
The segment of a positive (resp.~negative) clause $c_j$ starts at the top (resp.~bottom) endpoint of the vertical segment corresponding to the leftmost edge incident to $c_j$ in $G$ and ends at the top (resp.~bottom) endpoint of the rightmost edge.

In the next step, illustrated in \Cref{fig:schematicsC}, we replace each clause segment by three \emph{auxiliary} variable segment, one for each of the variables connected to the clause.
In the right side of the auxiliary segments, we connect them using vertical segments to an OR gadget, represented by a red horizontal segment.
As sketched in the figure, we construct (guided by the partial nested-order of the positive and negative clauses) a polygon whose boundary approximately follows the outer face of the resulting representation, but the finer details will be given later.
The following lemma was proved in~\cite{DBLP:journals/corr/abs-2404-09835}.

\begin{lemma}
There is a \segRepr as described, where the segments representing the OR gadgets are not crossed by any vertical segments, and all segment endpoints are incident to the outer face of the drawing.
\end{lemma}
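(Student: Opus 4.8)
The plan is to construct the \segRepr explicitly, following the high-level description already given, and then verify the three claimed properties: (i) all vertical segments to positive clauses lie right of all those to negative clauses, (ii) the OR-gadget segments (added in the second step, to the right of the auxiliary variable rows) are untouched by any vertical segment, and (iii) every segment endpoint is incident to the outer face of the resulting arrangement.

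First I would fix a left-to-right order of the columns. Using the planar embedding of $G$, I would split the edges incident to each variable $x_i$ into those going up to positive clauses and those going down to negative clauses; planarity ensures that around $x_i$ the ``up'' edges form a contiguous block and the ``down'' edges form a contiguous block in the rotation system. I would assign $x$-coordinates to the vertical segments so that, reading left to right, we first see all columns going to negative clauses (in the cyclic order inherited from the embedding) and then all columns going to positive clauses — this is exactly property (i). The horizontal extents of the primary variable rows $x_1,\dots,x_k$ (bottom-up) are chosen with strictly increasing left and right endpoints but a common overlap window, and the positive (resp.\ negative) clause rows are stacked above $x_k$ (resp.\ below $x_1$) consistently with the partial nested-order, as described. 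The key point for property (iii) is that this stacking respects nesting: an innermost clause gets the extreme row, so its short horizontal segment is not ``buried'' beneath a wider one; reading the outer face of the arrangement one can reach every clause-row endpoint and every variable-row endpoint without being blocked.

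Next, for the second step I would replace each clause row by three auxiliary variable rows and route, on the right-hand side (past the common overlap window used by the primary rows), three new vertical segments down/up from these auxiliary rows to a fresh horizontal red OR-gadget segment. Since these OR columns are placed entirely in the right margin — to the right of every right endpoint of every primary variable row and of every column realizing an edge $x_ic_j$ of $G$ — no edge-column can reach that far horizontally, which gives property (ii). I would then argue (iii) for the augmented arrangement: the auxiliary rows inherit the nested-stacking of their parent clause rows, the OR columns and OR segments sit in the outer margin, and so the outer face still touches every endpoint. This is essentially a planarity bookkeeping argument: the whole arrangement is obtained from a plane graph ($G$, subdivided and locally expanded) by a fixed ``row/column'' drawing convention that never creates a nested pocket.

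The main obstacle I expect is property (iii) — keeping all endpoints on the outer face — because it is the one place where the combinatorial structure of $G$ (specifically the nesting partial order of clauses) genuinely matters, rather than just a choice of coordinates. I would handle it by a careful invariant: process clauses in a linear extension of the nested-order from innermost to outermost, and maintain that after placing each clause row the ``upper envelope'' (for positive clauses) of the drawn rows is monotone in the right sense so that every previously placed endpoint is still visible from infinity. Since, however, this construction and lemma are quoted verbatim from Abrahamsen and Stade~\cite{DBLP:journals/corr/abs-2404-09835}, I would present the argument at the level of a sketch and cite their paper for the full details, noting only the (minor) changes needed here: we are free to pick the \MPTSAT\ instance coming from our NCL reduction, all clauses are monotone with a separating curve through the variable nodes, and the OR gadgets are added in the right margin exactly as in their packing construction.
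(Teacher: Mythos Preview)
The paper does not actually prove this lemma: it is stated immediately after the sentence ``The following lemma was proved in~\cite{DBLP:journals/corr/abs-2404-09835}'' and no argument is given beyond that citation. Your proposal therefore goes well beyond what the paper does --- you supply a genuine proof sketch (column ordering, nested stacking, right-margin placement of OR columns, outer-face invariant), whereas the paper simply defers the entire proof to the Abrahamsen--Stade reference. You even anticipated this yourself in your final paragraph.

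So there is nothing to compare against; your sketch is a reasonable reconstruction of the argument from~\cite{DBLP:journals/corr/abs-2404-09835}, and your plan to present it as a sketch with a citation is already more detailed than what the present paper provides. One small caution: your justification of property~(ii) places the OR columns ``to the right of every right endpoint of every primary variable row,'' but in the actual construction (see \Cref{fig:schematicsC}) the OR gadgets sit to the right of the \emph{auxiliary} variable rows for their clause, not globally past all primary rows; the reason no edge-column crosses them is the nesting order together with the fact that edge-columns terminate at the auxiliary rows. This is a minor wording issue and does not affect correctness, but if you flesh out the sketch you should phrase it that way.
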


Algebraically, the introduction of an auxiliary variable row for a clause $c_t$ corresponds to the following equivalence:
\[
 (x_i\vee x_j \vee x_k)
\Leftrightarrow
 \Big(\exists y_{t,i},y_{t,j},y_{t,k}:
(y_{t,i} \Rightarrow x_i)\wedge (y_{t,j}\Rightarrow x_j)\wedge(y_{t,k}\Rightarrow x_k)\wedge (y_{t,i}\vee y_{t,j}\vee y_{t,k})\Big)
\]
Here, $x_i,x_j,x_k$ are the primary (original) variables and $y_{t,i},y_{t,j},y_{t,k}$ are the auxiliary variables introduced for that particular clause. We remark that the formula is not monotone anymore, but the only non-monotone clauses are implications.
Note that the equivalence also preserves reconfiguration:
If, say, $x_i$ changes to {\tt false}, then we change $y_{t,i}$ to {\tt false} before changing $x_i$.
If $x_i$ changes to {\tt true}, then we change $y_{t,i}$ to {\tt true} after $x_i$.
For a negative clause $c_t$ , we use an analogous equivalence:
\begin{align*}
& (\neg x_i\vee \neg x_j \vee \neg x_k)
\Leftrightarrow \\
& \Big(\exists y_{t,i},y_{t,j},y_{t,k}:
(\neg y_{t,i}\Rightarrow \neg x_i)\wedge (\neg y_{t,j}\Rightarrow \neg x_j)\wedge(\neg y_{t,k}\Rightarrow \neg x_k)\wedge (\neg y_{t,i}\vee \neg y_{t,j}\vee \neg y_{t,k})\Big)
\end{align*}

\subsection{Components and gadgets}

We informally distinguish between \emph{gadgets} and \emph{components} in our constructions, where a gadget is given by a set of segments on the boundary creating some simple functionality, and a component is thought of as a whole region of the polygon that can involve an arbitrary number of gadgets.

On a very high level the truth assignment of a variable of $\phi$ is encoded by a configuration of a so-called
\emph{variable component}. The variable component (used to represent primary and auxiliary variables) has essentially three possible states, namely \emph{plus}, \emph{minus} and \emph{transitional}.
The first two correspond to the values {\tt true} and {\tt false} of a binary variable, while the truth value of the transitional state is defined to be the {\tt true}/{\tt false} value of the previous plus/minus state. In order 
to split the information represented in a variable component of a primary variable and transfer it to a variable component of an auxiliary variable we use PUSH gadgets. In particular, these gadgets are used to model the implication-clauses and they come in two versions:  PUSH-UP-IF-MINUS and  PUSH-DOWN-IF-PLUS. The former is used to model implications of the form $y \Rightarrow x$, the later for $\lnot y \Rightarrow \lnot x$. 
The gadgets use a vertically moving stack of square rows with growing width (called pyramid) to transmit information from primary to auxiliary variables, see for example \cref{fig:RowAlignment}. For a PUSH-UP-IF-MINUS gadget representing an implication $y \Rightarrow x$ a pyramid that is raised enforces 
that the auxiliary variable component of $y$ is set to minus, if the variable component of $x$ is set to minus. Similarly, 
a PUSH-DOWN-IF-PLUS gadget modeling $\lnot y \Rightarrow \lnot x$ has the functionality that a pyramid that is lowered enforces 
that the auxiliary variable component of $y$ is set to plus, if the variable component of $x$ is set to plus. 
We will make sure that the vertically moving pyramid structure do not collide with the vertically operating variable components
they \enquote{cross}.
To only allow satisfying assignments we place for every clause an \emph{OR gadget} that is connected with the corresponding
auxiliary variable components. We reuse the ideas for the PUSH Gadgets. For a clause $(y_i\lor y_j \lor y_k)$ we 
the auxiliary variables will push a pyramid up into the clause gadget if the corresponding auxiliary variable is set to minus.
The OR gadget has only place for two of these pyramids to push up. The clauses $(\lnot y_i\lor \lnot y_j \lor \lnot y_k)$
are handled symmetrically. Here, an auxiliary variable set to plus would result in a pyramid pushing down and a the OR
gadget can only incorporate two lowered pyramids.

We continue with describing the gadgets and components in detail.

\subsection{The variable component}\label{sec:simplevar}

The variable component has two \emph{main} rows of squares and an arbitrary number of PUSH gadgets, ensuring dependency between the primary and auxiliary variables and between the auxiliary variables and the OR gadgets.
The PUSH gadgets require the introduction of \emph{helper rows} padded along the main rows, above the top row or below the bottom row.
The construction of PUSH gadgets and helper rows will be explained in later sections.

The construction of the main rows of a variable component is shown in \Cref{fig:variable}.
A perfect configuration of the component has three possible states, as defined by the horizontal position of the two rightmost squares (covering the brown reference centers).
Note that none of the two squares can push right, and if one is homid, the other needs to be right. 
If the bottom is in the homid position, the state is \emph{plus}.
If the top is in the homid position, the state is \emph{minus}.
Otherwise, the state is \emph{transitional}.

\begin{figure}[htb]
    \centering
    \begin{subfigure}{.27\textwidth}
        \centering
         \includegraphics[page=15]{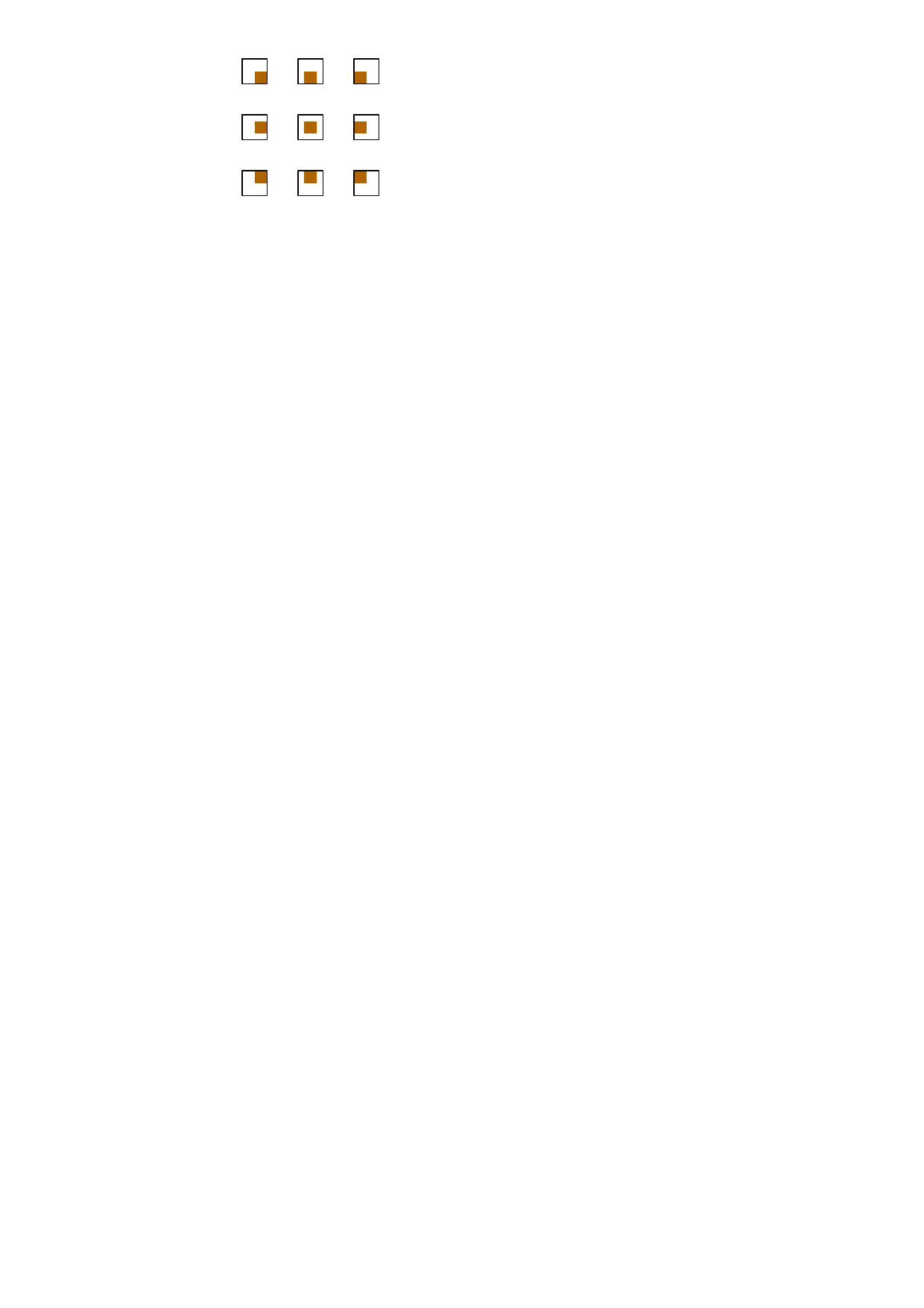}  
         \subcaption{plus state}
    \end{subfigure}\hfil
    \begin{subfigure}{.27\textwidth}
        \centering
         \includegraphics[page=16]{figures/Reconfiguration.pdf}  
         \subcaption{transitional state}
    \end{subfigure}\hfil
    \begin{subfigure}{.27\textwidth}
        \centering
         \includegraphics[page=17]{figures/Reconfiguration.pdf}  
         \subcaption{minus state}
    \end{subfigure}
    \caption{A variable component and its three states, defined by the horizontal positions of the two rightmost squares with brown reference centers.}
\label{fig:variable}
\end{figure}

Now we describe what happens when a block of raised or lowered squares \enquote{crosses} a variable component.
As discussed before, pyramids made of such blocks are used to create dependencies between two variable components, but it is important that they can cross other variable components without interacting with them.

When a stack of raised squares crosses a row of squares with a different horizontal alignment, the stack of raised squares will grow in width.
Hence, the stack will be pyramid shaped if all rows have different alignment than the neighboring rows.
It is important that we know the width of a pyramid when it reaches the target variable component where it will end.
Thus, we need to know exactly how many times the width is increased and which squares are raised in the top row.
To this end, and in order to create a necessary amount of vertical spacing, we introduce \emph{static} rows in between all neighboring pairs of variable components.
A static row consist of squares in the beginning and in the end which cannot move; some squares in the middle can move vertically up and down, see \Cref{fig:static}. 
\begin{figure}[htb]
\centering
\includegraphics[page=3]{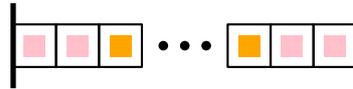}
\caption{A static row.
The squares have fixed horizontal positions, but some of the middle squares will be able to move up or down because of pyramids.}
\label{fig:static}
\end{figure}
So in particular, all squares in a static row have a fixed horizontal position. 

Moreover, we ensure that the fixed horizontal position is never aligned with the rows of the variable component.
Furthermore, the top row of a variable components is never aligned with the bottom row, see \Cref{fig:RowAlignment}.
Hence, when a pyramid crosses a variable component, the specific horizontal positions of the rows do not affect which squares are raised/lowered on the other side of the variable.
This is expressed in the following lemma.

\begin{figure}[htb]
\centering
\includegraphics[page=9]{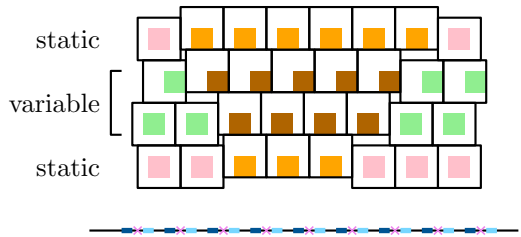}
\caption{A pyramid crossing a variable component.
The scale at the bottom shows the intervals of the possible $x$-coordinates of the left corners of the upper and lower squares in the variable rows (light and dark blue, respectively) and those of the squares in the static rows (pink).
Since these three groups of intervals are pairwise disjoint, the pyramid grows in width by one square per row.
Hence, the width of the pyramid always grows by three squares when crossing a variable component, regardless of the state of the variable component.
}
\label{fig:RowAlignment}
\end{figure}

\begin{lemma}\label{lem:PyramidGrowth}
Consider a perfect configuration.
If a block of consecutive squares in a static row below (above) a variable component are up (down), then the corresponding squares in the static row above (below) the variable component must also be up (down) as well as two extra squares on the right (left) and one on the left (right).
\end{lemma}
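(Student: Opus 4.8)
The plan is to isolate what happens across a single \emph{gap} between two vertically adjacent rows of squares and then to chain three such analyses along the path that leads from the static row below the variable component, up through its bottom main row and its top main row, to the static row above. Throughout we work with a perfect configuration, so by \Cref{lem:refCenters} every square contains exactly one reference center and is interior-disjoint from all others; in particular each square is confined to a single grid column (and, as in the discussion preceding the lemma, to the nine candidate positions relative to its reference center). We argue the ``up'' version; the ``down'' version is the same argument read off the static row above and run downward, and we explain at the end why this flips the side on which the block grows.

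The core step is the following \emph{single-gap claim}: if in a perfect configuration $k$ consecutive squares of a row $R$, in columns $x_0,\dots,x_0+k-1$, are up, and $R'$ is the row directly above $R$, then the $k$ squares of $R'$ in those same columns are up, and so is exactly one further square of $R'$ --- the one in column $x_0-1$ if $R$'s horizontal alignment interval lies to the left of $R'$'s, and the one in column $x_0+k$ if it lies to the right. The ``forced up'' direction is immediate from collision-freeness: an up square of $R$ fills the four grid rows just below $R'$'s reference centers, and a square of $R'$ avoids all of those rows exactly when it is up, so any non-up square of $R'$ whose column overlaps the horizontal span of the up block would overlap that block with positive area. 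To pin down \emph{which} columns of $R'$ are forced, I would run a short interval computation resting on three facts: reference centers are spaced $8$ apart in $x$; every $8\times 8$ square contains its $4$-wide reference center; and the alignment intervals of $R$ and $R'$ are disjoint modulo $8$ and narrow --- this last point being exactly the separation of the three interval groups shown by the scale in \Cref{fig:RowAlignment}. With this slack, the up block (whose horizontal span $[\ell,r]$ has $r-\ell = 8k$) can meet the $R'$-square in column $x$ only for $x$ in a set of exactly $k+1$ consecutive columns, namely $x_0,\dots,x_0+k-1$ together with one neighbour on the side fixed by the cyclic position of the two alignment intervals; and every column of that set is actually forced, even though in an interior row the span $[\ell,r]$ is not at fixed coordinates but only confined to short intervals. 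Hence the up block in $R'$ is again a single consecutive block, one wider than in $R$.

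Now apply the single-gap claim to the three gaps along the crossing: (static row below)/(bottom main row), (bottom main row)/(top main row), and (top main row)/(static row above). Static rows are never aligned with the variable rows and the two variable rows are never aligned with each other, so each gap is genuine, the block grows by exactly one column at each gap, and it stays a single consecutive block. The construction fixes the four alignment positions --- static below, bottom main, top main, static above --- as displayed in \Cref{fig:RowAlignment}, and for this arrangement the three growths along the upward crossing are, in total, once to the left and twice to the right. Therefore at the static row above the up block is the original $k$ columns extended by one column on the left and two columns on the right, which is the assertion. For the ``down'' version, a block of down squares in the static row above pushes downward, crossing the top main row and then the bottom main row before reaching the static row below; the single-gap claim applies verbatim, except that now the \emph{upper} row of each gap pushes down onto the lower row --- by the same interval computation with the roles of $R$ and $R'$ swapped, this moves each width increase to the opposite side, giving two extra columns on the left and one on the right.

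I expect the single-gap claim to be the only real obstacle. The collision argument is routine; the work is in showing that the block grows by \emph{exactly} one column, on the \emph{correct} side, and robustly so even though the horizontal span of an up block in an interior row is pinned only to short intervals rather than to exact coordinates. This is precisely what forces the alignment intervals of consecutive rows to be narrow and pairwise separated in the cyclic order read off \Cref{fig:RowAlignment} --- in other words, it is the step where the quantitative geometry of the construction is actually needed.
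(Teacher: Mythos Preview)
Your approach is essentially the same as the paper's: both arguments proceed row by row from the lower static row through the two main variable rows to the upper static row, using that the horizontal alignment intervals of consecutive rows are pairwise disjoint (the scale in \Cref{fig:RowAlignment}) to force one additional square to be up at each transition. The paper's proof is terser and argues directly from the specific alignments in the figure together with the fact that variable-row squares are homid or left, whereas you isolate a reusable ``single-gap claim'' and chain it three times; this is a cleaner decomposition but not a different idea. One small quibble: your parenthetical that each square is confined ``to the nine candidate positions'' overstates what \Cref{lem:refCenters} gives --- those nine positions are only the ones needed to describe the intended motions, not all positions in a perfect configuration --- but since you later work correctly with alignment \emph{intervals} rather than discrete positions, this does not affect the argument.
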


\begin{proof}
The case where a block of squares are up is illustrated in \Cref{fig:RowAlignment}. The horizontal alignment of squares in the static rows is fixed by the polygon edges as shown in \Cref{fig:static}. The squares in each row of the variable gadget must each be in the homid or left, as shown in \Cref{fig:variable}. So the squares in the lower variable row that are up in \Cref{fig:RowAlignment} always overlap with the block of static squares, so they must likewise be up. Similarly, the squares that are shown to be up in the upper variable row and upper static row need to be up.
The case where a block of squares are down is analogous.
\end{proof}

Recall that there may be helper rows above or below the main rows of a variable component, which will be explained in more detail in \Cref{sec:implies}.
Conceivably, these could be unaligned with the main rows, which would cause a pyramid to grow more in width when crossing the variable.
However, if they are aligned (which is our ``intended behavior'' of the helper rows), then the width will not grow beyond what is expressed by \Cref{lem:PyramidGrowth}.

\subsection{PUSH gadgets and helper rows}\label{sec:implies}

\Cref{fig:puim,fig:pdip} show the PUSH-UP-IF-MINUS and PUSH-DOWN-IF-PLUS gadgets, respectively.
The gadgets are mirror images of each other by a horizontal axis.
In order to make PUSH gadgets, we need a square that is pushed up or down only when it is left.
To this end, we introduce \emph{helper rows}, which are extra rows above and below the main variable rows.
Consider the PUSH-UP-IF-MINUS gadget, shown in \Cref{fig:puim}.
If the squares of the lower main row are left, the squares of the helper row will be left as well, because the square $a$ presses down $b$, which thus pushes $c$ to the left.
Therefore the square $d$ will be up, and a pyramid of raised squares is created.

\begin{figure}[htbp]
\centering
\includegraphics[page=4, width=\textwidth]{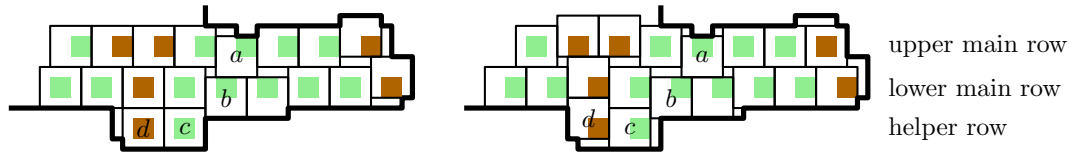}
\caption{The plus and minus states of PUSH-UP-IF-MINUS.
In minus, square $d$ is pushed up.}
\label{fig:puim}
\end{figure}

The PUSH-DOWN-IF-PLUS gadget works by analogous mechanics. If the squares in the upper main row are left, two squares of the lower row are pushed down, see \Cref{fig:pdip}. 

\begin{figure}[htbp]
\centering
\includegraphics[page=5]{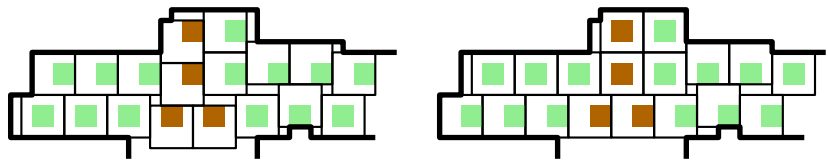}
\caption{The plus and minus states of PUSH-DOWN-IF-PLUS. In plus, the squares with brown reference centers are pushed down.}
\label{fig:pdip}
\end{figure}

All helper rows are created at the right end of the variable gadget and end at a PUSH gadget.
\Cref{fig:helperschematics} shows schematically where the helper rows are placed, and \Cref{fig:helperrows} shows a detailed example with multiple helper rows.
Note that with multiple helper rows, the same principle is used as with a single helper row in order to ensure that when the lower main row is left, then all helper rows below are also left, so they all create pyramids of raised squares.

\begin{figure}[htbp]
\centering
\includegraphics[page=10,scale=.8]{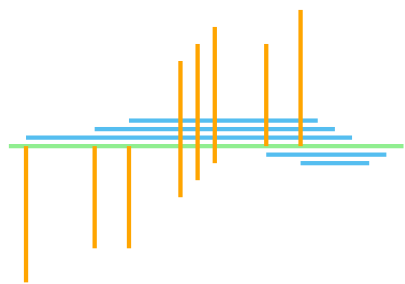}
\caption{Schematic showing the horizontal span of the helper rows (blue) along a principal variable component. 
The variable connects to three auxiliary variables below and two above, so there are three helper rows above and two below.
In between the two groups, pyramids from other principal variable components can cross, indicated as three orange vertical segments in the middle.}
\label{fig:helperschematics}
\end{figure}

\begin{figure}[htbp]
\centering
\includegraphics[width=\textwidth,page=6]{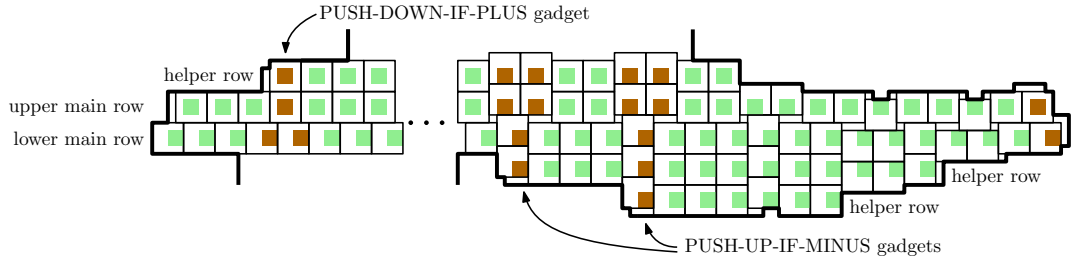}
\caption{A variable component in the minus position with two PUSH-UP-IF-MINUS gadgets and one PUSH-DOWN-IF-PLUS gadget.
Thus, there are two helper rows at the bottom and one at the top.
Other pyramids may cross the variable in the region indicated by the dots.}
\label{fig:helperrows}
\end{figure}

\begin{figure}[tb]
\centering
\includegraphics[width=\textwidth,page=7]{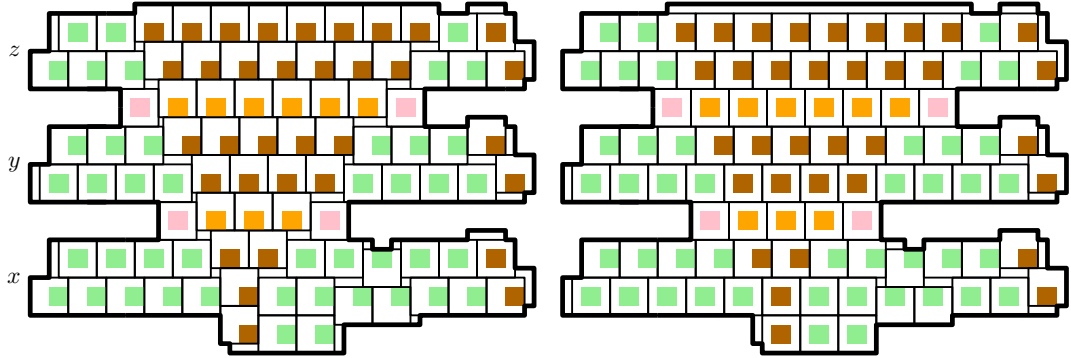}
\caption{There is a PUSH-UP-IF-MINUS gadget on the bottom variable $x$, which connects $x$ to the top variable $z$.
When $x$ is minus, a pyramid of raised squares is created.
The boundary of the polygon at $z$ forces $z$ to also be minus.
The state of $y$ has no influence.
When $x$ is plus, the pyramid does not have to raise, so $z$ is not restricted.
}
\label{fig:implication}
\end{figure}

We now describe how a pyramid ends at an auxiliary variable component; see \Cref{fig:implication} for a concrete example to support the description.
When we make a PUSH-UP-IF-MINUS gadget from a primary variable~$x$ to an auxiliary variable $y$, we create a pocket in the boundary of $P$ along the upper primary row of $y$ to enable the squares of the pyramid to raise.
We place the pocket so that the squares can only raise if they are homid (and if the pyramid has grown in width by exactly three squares every time it crossed another variable component).
Thus, the rightmost square of the top row of $y$ is also homid, so $y$ must be minus.
We have thus made the implication $y \Rightarrow x$.

In minus, pyramids are created to the positive auxiliary variables, which in turn create pyramids to the positive clauses.
Likewise, in plus, pyramids are created to the negative clauses.
In transitional, pyramids are created to all clauses.
As we will see in \Cref{sec:or}, there will only be room for two pyramids for each clause.
Hence, each clause ensures that one of the variables is in the plus/minus state that represents the {\tt true}/{\tt false} value that satisfies the clause.
We summarize our findings as the following lemma.

\begin{lemma}\label{lem:implication}
In every perfect configuration, the following holds.
If a variable is not plus (minus), a pyramid of raised squares appears from any PUSH-UP-IF-MINUS (PUSH-DOWN-IF-PLUS) gadget, and if the pyramid is received by an auxiliary variable, that variable must be minus (plus).
As a consequence, we can realize the implications $y \Rightarrow x$ and $\lnot y \Rightarrow \lnot x$, i.e., ensure that the values of variables encoded by the configuration satisfy the implications.
\end{lemma}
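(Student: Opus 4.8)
The plan is to trace, step by step, how a single square that is forced into the \emph{left} position inside a PUSH gadget propagates into a full pyramid of raised squares reaching the target auxiliary variable, and then to read off the forced state of that variable from the pocket in $\partial P$. Throughout we rely on \Cref{lem:refCenters}: in a perfect configuration every square is confined to its own reference center, so each square has at most the nine placements of \Cref{fig:colorcodesA}, and the only freedom left is a small local shift.

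First I would handle the PUSH-UP-IF-MINUS gadget attached to a variable component $x$ under the hypothesis that $x$ is not in the plus state. By the definition of the states (\Cref{fig:variable}) and the fact that neither of the two rightmost squares of the lower main row can push right, the lower main row must then be in the \emph{left} alignment. I would then chase the local forcing chain depicted in \Cref{fig:puim}: square $a$ of the lower main row (now left) presses square $b$ of the helper row down; being down, $b$ pushes square $c$ to the left; hence square $d$ is forced up, creating the base of a pyramid of raised squares. With several helper rows one repeats the identical single-row argument row by row (as sketched around \Cref{fig:helperrows}); the only thing to verify is that the horizontal spans overlap so the forcing chain is never interrupted, which is guaranteed by the way the helper rows are laid out at the right end of the variable gadget.

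Next I would propagate the pyramid upward (downward, for PUSH-DOWN-IF-PLUS) through the static rows and through every variable component it crosses on its way to the auxiliary variable $y$. Here \Cref{lem:PyramidGrowth} does the work: each time the block of up-squares passes a static row above a variable component it reappears exactly one square wider on the right and one square wider on the left, and crossing a full variable component widens it by exactly three squares \emph{regardless} of that variable's state, because the $x$-coordinate intervals of the upper main row, the lower main row, and the static rows are pairwise disjoint (the content of \Cref{fig:RowAlignment}), and the helper rows are assumed aligned with the main rows so they add no extra growth. Therefore, when the pyramid arrives at $y$, its width and the precise set of raised squares in its top row are determined solely by the number of variable components crossed, which is fixed by the construction.

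Finally, at $y$ the pocket we placed in $\partial P$ along the upper main row of $y$ is positioned so that the squares of the correctly grown pyramid can enter it only if they are homid, which forces the rightmost square of the top main row of $y$ to be homid as well, i.e.\ forces $y$ into the minus state (see \Cref{fig:implication}). Since $x$ not plus forced the pyramid to be raised, $y$ is forced to be minus, hence not plus; contrapositively, $y$ plus implies $x$ plus, which is exactly $y\Rightarrow x$. The PUSH-DOWN-IF-PLUS case is obtained by reflecting the whole argument across a horizontal axis, yielding $\lnot y\Rightarrow\lnot x$. I expect the main obstacle to be the bookkeeping in the third step: one must check that the growth is \emph{exactly} three per crossed variable component in every admissible configuration, not merely in the figure, so that the pocket at $y$ is neither too permissive (allowing $y$ to stay plus) nor impossible to satisfy; this is precisely where the interval-disjointness of \Cref{lem:PyramidGrowth} together with the non-alignment conditions on the static and helper rows must be invoked with care.
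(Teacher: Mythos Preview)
Your argument is correct and mirrors the paper's own treatment: the lemma there is presented without a separate proof, as a summary of the preceding discussion on PUSH gadgets, and you reproduce exactly that reasoning---the $a\to b\to c\to d$ forcing chain of \Cref{fig:puim}, propagation through intermediate variable components via \Cref{lem:PyramidGrowth}, and the pocket at the receiving auxiliary variable forcing the top row to be homid and hence the minus state. One small slip to fix: \Cref{lem:PyramidGrowth} adds two squares on one side and one on the other per crossed variable component, not one on each side, although your stated total of three is correct.
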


\subsection{OR gadget}\label{sec:or}

\begin{figure}[htbp]
\centering
\includegraphics[page=12]{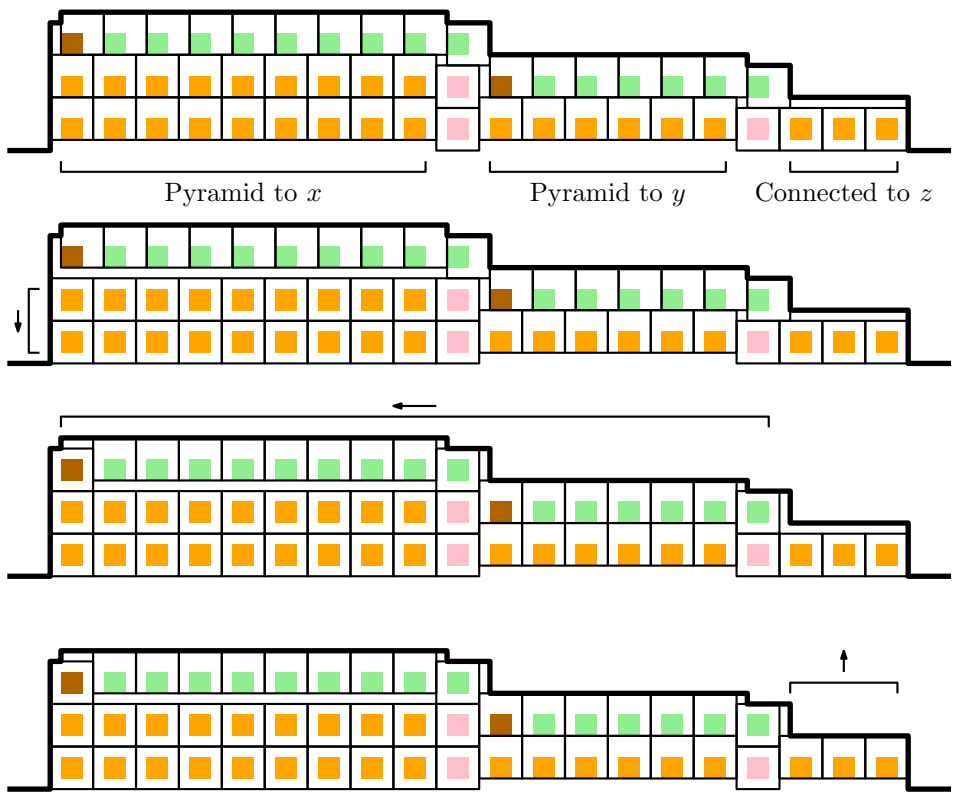}
\caption{The OR gadget for a clause $x\lor y\lor z$.
Initially, there is a pyramid from $x$ and $y$.
The pyramid from $x$ is removed, and it is shown how the top squares reconfigure so that a pyramid from $z$ can raise.
}
\label{fig:orgadget}
\end{figure}

\Cref{fig:orgadget} shows the positive OR gadget.
The gadget is connected to three auxiliary variable components.
Using PUSH-UP-IF-MINUS gadgets, pyramids can raise from the variables and end in the OR gadget.
The gadget allows for two pyramids to be raised, but not all three at once.
Hence, one variable must be plus.
When at most one pyramid is raised, the squares in the top of the gadget can reconfigure so that there is room for any of the other two pyramids to raise.
An example is shown in the figure---the reader can easily check that it is likewise possible in the other five cases where one pyramid is raised and we need to make room for one of the other two to raise as well.
This corresponds to a reconfiguration of the formula $\phi$:
If $x\lor y\lor z$ is a clause and one of the variables, say $z$, changes from positive to negative, then $x$ or $y$ must have been positive before $z$ changes (otherwise the clause would not be satisfied after $z$ changing).
Hence, there was only one pyramid, so the squares in the gadget can reconfigure to accommodate for the pyramid from $z$ to raise.

\Cref{fig:orgadgetinstallation} shows how the OR gadget is connected to the variable components.
The negative OR gadget is obtained by reflection along a horizontal axis.

\begin{lemma}\label{lem:or}
In a positive OR gadget, at least one of the connected auxiliary variables is plus. In a negative OR gadget, at least one of the connected auxiliary variables is minus.
\end{lemma}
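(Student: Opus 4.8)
The plan is to derive the statement by contradiction from Lemma~\ref{lem:implication}, together with a space-counting argument internal to the OR gadget. I treat the positive OR gadget first; the negative one follows by the horizontal reflection that turns PUSH-UP-IF-MINUS gadgets into PUSH-DOWN-IF-PLUS gadgets and swaps ``plus''/``minus'' with ``up''/``down''.

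First I would fix a perfect configuration and a clause $x \lor y \lor z$ whose positive OR gadget is connected to the auxiliary variable components $x$, $y$, $z$ through PUSH-UP-IF-MINUS gadgets, and suppose for contradiction that none of $x$, $y$, $z$ is in the plus state (so each is minus or transitional). By Lemma~\ref{lem:implication}, every one of the three PUSH-UP-IF-MINUS gadgets then produces a pyramid of raised squares that is received in the OR gadget. By \Cref{lem:PyramidGrowth}, each such pyramid has a width that is completely determined by how many variable components it crossed (it grows by exactly three squares per crossing), so its ``footprint'' inside the OR gadget is known independently of the states of the crossed variables.

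The core step is then the purely local geometric claim, depicted in \Cref{fig:orgadget}, that the boundary of $P$ inside the OR gadget leaves only enough clearance for two of these three pyramids to be raised simultaneously: a raised pyramid forces the squares of the top row of the gadget out of the positions they would otherwise occupy, and --- using \Cref{lem:area4} and \Cref{lem:refCenters} to control exactly how much room each square needs around its reference center --- the horizontal freedom of the top row suffices to make room for at most two pyramids at once, so a third raised pyramid would force two squares to overlap or a square to leave $P$. This contradicts the existence of the configuration, hence at least one of $x$, $y$, $z$ is plus. Finally, applying the reflected construction and the same argument to the negative OR gadget (where a variable that is not minus produces a pyramid of lowered squares via a PUSH-DOWN-IF-PLUS gadget) shows that at least one connected auxiliary variable is minus.

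I expect the main obstacle to be making the ``room for only two pyramids'' claim rigorous rather than merely reading it off \Cref{fig:orgadget}: one must enumerate the horizontal positions the top-row squares of the gadget can take (as forced by its polygon edges), check that in each relevant case two pyramids can indeed coexist, and verify that no placement --- not even one exploiting the continuous homid/vemid intermediate positions analyzed in \Cref{sec:verification} --- creates clearance for a third. This is a finite but somewhat delicate case analysis of the square placements dictated by the gadget's boundary.
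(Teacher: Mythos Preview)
Your proposal is correct and follows the same approach the paper takes: the paper does not give a separate proof of this lemma at all, but only the informal sentence preceding it (``The gadget allows for two pyramids to be raised, but not all three at once. Hence, one variable must be plus.'') together with \Cref{fig:orgadget}. Your outline --- invoke \Cref{lem:implication} to force three pyramids under the negated hypothesis, use \Cref{lem:PyramidGrowth} to fix their widths, and then argue from the gadget geometry that only two fit --- is exactly the intended reasoning, and your identification of the finite case analysis as the only nontrivial remaining step is accurate; the paper simply delegates that step to the figure.
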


\begin{figure}[htbp]
\centering
\includegraphics[page=11,width=\textwidth]{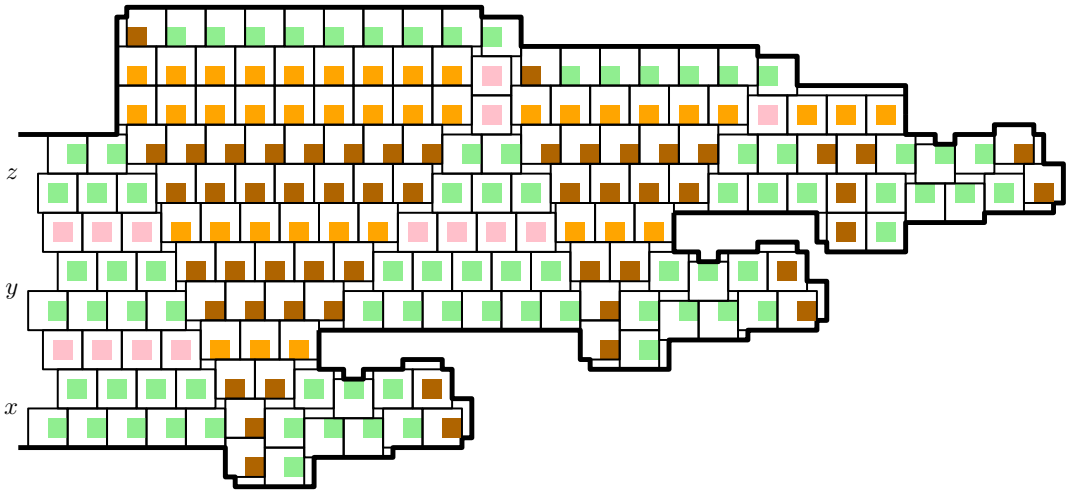}
\caption{The positive OR gadget connected to the auxiliary variable components for $x$, $y$ and $z$.
Here, $x$ and $y$ are minus, so they create pyramids, but $z$ is plus, which allows for a configuration.}
\label{fig:orgadgetinstallation}
\end{figure}

\subsection{Verification}\label{sec:verification}

Recall that we start with a \MPTSAT formula $\phi$ and two satisfying assignments $\sigma_0$ and $\sigma_1$ of $\phi$.
In our reduction, we create a polygon $P$ as described.
We define start and target configurations $c_0$ and $c_1$ according to $\sigma_0$ and $\sigma_1$:
We place squares in all primary and auxiliary variable components in $c_i$ in the plus/minus state corresponding to the value of the variable in $\sigma_i$.
There is room for the pyramids created by PUSH gadgets by construction.
We now need to verify that there is a reconfiguration between $\sigma_0$ and $\sigma_1$ in $\phi$ if and only if there is one between $c_0$ and $c_1$ in $P$.
This is expressed by the following two lemmas.

\begin{lemma}\label{lem:dir1}
If $\phi$ has a reconfiguration between two satisfying assignments $\sigma_0$ and $\sigma_1$, then there is also a reconfiguration between the corresponding start and target configurations $c_0$ and $c_1$ in $P$.
\end{lemma}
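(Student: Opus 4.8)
The plan is to start from a reconfiguration $\sigma_0=\tau_0,\tau_1,\dots,\tau_m=\sigma_1$ of $\phi$, where consecutive assignments differ in exactly one (primary) variable, and to translate it into a reconfiguration of the squares in $P$. It is convenient to work with the \emph{expanded} formula $\hat\phi$ obtained from $\phi$ by adding, for each clause $c_t$, the three auxiliary variables together with the implication clauses appearing in the equivalences used to introduce the auxiliary variable rows; $\hat\phi$ is no longer monotone, but its only non-monotone clauses are implications. Every assignment $\tau_\ell$ is extended to $\hat\phi$ \emph{canonically} by giving each auxiliary variable the truth value of the primary variable it copies. This canonical extension satisfies $\hat\phi$, and for $\ell\in\{0,1\}$ it agrees with the assignment encoded by the configuration $c_\ell$. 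The first task is then to refine each step $\tau_\ell\to\tau_{\ell+1}$, say the one flipping $x_i$, into a path in $G(\hat\phi)$: when $x_i$ goes from {\tt true} to {\tt false} we first flip, one at a time, the auxiliary copies of $x_i$ in the positive clauses, then flip $x_i$ itself, then flip the auxiliary copies of $x_i$ in the negative clauses; the opposite order is used when $x_i$ goes from {\tt false} to {\tt true}. That every intermediate assignment still satisfies $\hat\phi$ follows from the equivalences together with the fact that $\tau_{\ell+1}$ satisfies all original clauses: in a clause containing $x_i$ the two literals other than the one on $x_i$ are never both falsified, so the monotone clause on the corresponding auxiliary variables always has a satisfied literal among the two copies that are not currently being moved.

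Next I would realize a single elementary step -- the flip of one variable $v$ of $\hat\phi$, whether primary or auxiliary -- by an explicit finite sequence of small translations of squares. Sliding the two rightmost squares of $v$'s variable component between the plus, transitional and minus configurations of \Cref{fig:variable} takes a constant number of moves. While $v$ passes through the transitional state, the PUSH gadgets attached to $v$ have to create or destroy pyramids of raised (or lowered) squares; creating a pyramid means sliding the relevant squares of the static rows upward (downward) until they seat in the pocket at the receiving auxiliary variable component or OR gadget, and destroying one is the reverse motion. By \Cref{lem:PyramidGrowth} a pyramid grows in width by exactly three squares every time it crosses another variable component, regardless of that component's state, so no component that a pyramid merely passes through needs to move; and by \Cref{lem:implication} the receiving pocket admits the pyramid exactly when the receiving variable sits in the state dictated by the implication, which the refined $\hat\phi$-sequence has already arranged. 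All of these moves keep the squares inside $P$ and pairwise interior-disjoint, since that is the behaviour designed into the gadgets.

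The one point where simply ``moving the obvious squares'' does not suffice, and the step I expect to be the main obstacle, is the OR gadget. Raising a \emph{new} pyramid into an OR gadget requires first rearranging the squares at the top of the gadget to open a slot, and \Cref{fig:orgadget} shows that this rearrangement is possible only if at most one pyramid is currently present. Here I would use the counting behind \Cref{lem:or}: at the instant the refined sequence moves an auxiliary variable $y_{t,i}$ off the value that satisfies the monotone clause of $c_t$, that clause was satisfied an instant earlier, so one of the other two auxiliary variables of $c_t$ still carries the satisfying value and is therefore not pushing a pyramid into $c_t$'s OR gadget; hence at most one pyramid is up, the rearrangement succeeds, and only then do we let the pyramid from $y_{t,i}$ rise. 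Destroying a pyramid -- which happens when a variable returns to the satisfying value -- needs no precondition, and afterwards the top squares of the OR gadget can be restored to a canonical placement if convenient; likewise, when a transitional primary variable sends a pyramid into an auxiliary variable that has already been set to the forced value, that variable's pocket is empty and the pyramid fits. Concatenating the move sequences obtained for all elementary steps, over all $m$ original steps, produces the desired reconfiguration from $c_0$ to $c_1$ in $P$.
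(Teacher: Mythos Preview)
Your proposal is correct and follows essentially the same approach as the paper. The paper's proof is organized as an explicit eight-step procedure for each primary variable change (make room in the OR gadgets, raise pyramids to the OR gadgets, flip the positive auxiliaries, raise pyramids from the primary, flip the primary, then symmetrically handle the negative side), whereas you factor the same argument through the expanded formula $\hat\phi$: first refine the $\phi$-reconfiguration to a $\hat\phi$-reconfiguration, then realize each single-variable flip of $\hat\phi$ geometrically. Your refinement order (positive auxiliaries, then primary, then negative auxiliaries) and your treatment of the OR gadget---rearrange the top squares only when at most one pyramid is present, which you justify via satisfaction of the auxiliary clause---match the paper's steps one for one. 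The only cosmetic difference is that the paper treats the pyramid raise/lower operations as separate numbered steps, while you fold them into the flip of the variable that owns the PUSH gadget; either way the preconditions (receiving variable already in the forced state, OR gadget already rearranged) are met because of the order you chose.
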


\begin{proof}
We consider the event that the variable $x$ changes from positive to negative in the reconfiguration of the formula $\phi$.
The event of changing from negative to positive is analogous.
We do the following reconfiguration, where we note that steps 6--8 are analogous to steps 2--4, but in reverse order:
\begin{enumerate}
\item
Reconfigure the top squares in positive OR gadgets where $x$ appears, so that there is room for a pyramid from the auxiliary variable component of $x$.
This is possible as there must be two true variables in any positive clause where $x$ appears before $x$ changes to negative, so there is at most one raised pyramid.

\item
Raise the pyramids from the positive auxiliary variable components to the positive OR gadgets.

\item
Reconfigure the positive auxiliary variable components to minus.

\item
Raise the pyramids from the primary to the positive auxiliary variable components.

\item
Reconfigure the primary variable component to minus.

\item
Remove the pyramids from the primary to the negative auxiliary variable components.

\item
Reconfigure the negative auxiliary variable components to minus.

\item
Remove the pyramids from the negative auxiliary variable components to the negative OR gadgets.
There is now at most one pyramid to the gadget. 
\end{enumerate}
To verify the correctness of the reconfiguration of the 
variable components consult~\Cref{fig:rec-variable}.
\end{proof}

\begin{figure}[htb]
    \centering
    \includegraphics[width=.9\textwidth,page=18]{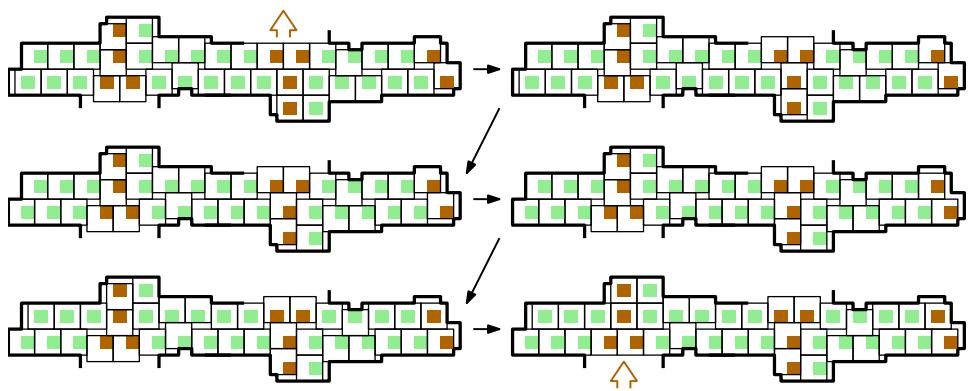}
    \caption{Reconfiguration of a variable component. Raising a pyramid pointing upwards allows to reconfigure the component and raising
    a pyramid pointing downwards.}
    \label{fig:rec-variable}
\end{figure}

\begin{lemma}\label{lem:dir2}
If there is a reconfiguration in $P$ from the start configuration $c_0$ to the target $c_1$ in $P$, then there is a reconfiguration in the formula $\phi$ from the satisfying assignment $\sigma_0$ to $\sigma_1$.
\end{lemma}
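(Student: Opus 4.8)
The plan is to show that \emph{any} reconfiguration of the squares from $c_0$ to $c_1$ induces a walk in $G(\phi)$ from $\sigma_0$ to $\sigma_1$, using \Cref{lem:refCenters}, \Cref{lem:implication}, and \Cref{lem:or} to constrain the situation at every instant. First I would set up the bookkeeping: fix a reconfiguration from $c_0$ to $c_1$, and note that, since the set of collision-free placements of the squares in $P$ is a bounded semialgebraic set, we may assume the motion is piecewise linear (equivalently, work within the cells of a cylindrical algebraic decomposition), so that the relevant discrete data changes only finitely often. By \Cref{lem:refCenters}, every square stays inside one reference center for the whole motion; hence at every time $t$ each primary and auxiliary variable component has a well-defined state in $\{\mathrm{plus},\mathrm{minus},\mathrm{transitional}\}$, read off from the horizontal positions of its two brown squares as in \Cref{fig:variable}, and by continuity a component can change between the plus and minus states only by passing through transitional. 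In particular the configuration is perfect throughout, so \Cref{lem:implication} and \Cref{lem:or} apply at every time~$t$.

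The heart of the argument is the following invariant. For a time $t$, let $T(t)$ be the set of primary variables whose component is transitional. Then for \emph{every} assignment of truth values to the variables in $T(t)$, the assignment that additionally sets each plus component to {\tt true} and each minus component to {\tt false} satisfies $\phi$. Indeed, consider a positive clause $(x_i\vee x_j\vee x_k)$. By \Cref{lem:or}, one of the auxiliary variables, say $y_{t,i}$, is plus; by \Cref{lem:implication} applied to the PUSH-UP-IF-MINUS gadget realizing $y_{t,i}\Rightarrow x_i$, if $x_i$ were not plus a raised pyramid would force $y_{t,i}$ to be minus, a contradiction. Hence $x_i$ is plus, so $x_i\notin T(t)$, and the clause is satisfied by the literal $x_i$ no matter how the variables in $T(t)$ are set. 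Negative clauses are symmetric, using the negative OR gadget and the PUSH-DOWN-IF-PLUS gadget realizing $\lnot y_{t,i}\Rightarrow\lnot x_i$. Note that the ``reading'' of a configuration with $T(t)=\emptyset$ is in particular a satisfying assignment of $\phi$, and that the readings of $c_0$ and $c_1$ are exactly $\sigma_0$ and $\sigma_1$ by construction.

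Finally I would extract the walk. Sweep $t$ from $0$ to $1$ while maintaining a current assignment $\gamma$, initialized to $\sigma_0$, with the invariant that $\gamma$ agrees with the current configuration on every non-transitional primary variable (so, by the claim above, $\gamma$ always satisfies $\phi$). The only events that can threaten this invariant are those where a primary variable $x_i$ leaves the transitional state into a plus or minus state whose value disagrees with $\gamma(x_i)$; at such an event, flip $\gamma(x_i)$. Both the pre-flip and the post-flip assignments are of the ``arbitrary on the current transitional set, as configured elsewhere'' form, hence both satisfy $\phi$ by the invariant of the previous paragraph, and they differ in exactly one variable, so this is a legal edge of $G(\phi)$; if several variables leave the transitional state at the same instant, perform the corresponding single-variable flips in any order, each intermediate assignment again being of the admissible form. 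At $t=1$ the configuration is $c_1$, all components are plus or minus, and the invariant forces $\gamma=\sigma_1$; the successive values of $\gamma$ form the desired reconfiguration of $\phi$.

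I expect the main obstacle to be precisely this last step: a priori several variable components may be transitional simultaneously, and a transitional component's ``intended'' truth value is not pinned down by the geometry, so there is no obvious way to linearize the continuous motion into single-variable flips. The resolution is the invariant of the second paragraph --- that at \emph{every} instant every clause of $\phi$ is satisfied by some plus or minus (hence non-transitional) literal, forced by the OR gadgets together with the implications --- which makes the set of assignments ``compatible with time $t$'' a combinatorial cube all of whose vertices satisfy $\phi$, and it is this robustness that lets the flips be carried out one at a time. A secondary technical point is justifying the reduction to piecewise-linear motions, which I would handle with the standard semialgebraic argument sketched above.
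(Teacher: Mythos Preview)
Your argument and the paper's share the decisive observation: at every moment, \Cref{lem:or} together with \Cref{lem:implication} force each clause to have a primary variable currently in the plus (resp.\ minus) state, so any completion of the transitional variables satisfies $\phi$, and a walk in $G(\phi)$ can be read off by carrying the previous value through transitional phases. Where you diverge is in how the continuous motion is tamed. You reduce to piecewise-linear paths via a semialgebraic/CAD argument and then invoke the gadget lemmas at arbitrary real positions. The paper instead uses an elementary \emph{floor trick}: replace each coordinate $x_i(t)$ by $\lfloor x_i(t)\rfloor$, inserting unit-speed segments at the jump times, then do the same for the $y$-coordinates; one checks directly (using that $P$ is a grid polygon and that $\lvert\lfloor x_i\rfloor-\lfloor x_j\rfloor\rvert\le 7$ forces $\lvert x_i-x_j\rvert<8$) that this is still a collision-free reconfiguration, now with every square at an integer position and with a further decomposition so that exactly one square moves per step. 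This buys two things your route does not. First, the gadget lemmas need only be applied at the nine canonical positions of \Cref{fig:colorcodesA}, which is all the paper actually verifies; your proof applies \Cref{lem:implication} and \Cref{lem:or} at arbitrary continuous placements, and it is not obvious they hold there (for instance, whether a positive OR gadget really cannot accommodate three pyramids each raised by~$\varepsilon$, i.e., whether ``at least one auxiliary is \emph{exactly} plus'' is forced). Second, ``one variable changes at a time'' comes for free from ``one square moves at a time,'' so the paper does not need your cube-invariant serialization of simultaneous transitional variables --- though your formulation of that invariant is clean and would work equally well after the floor trick.
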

\begin{proof}
We show that we can assume that the squares move in time steps.
In each step, a single square moves from one integer position to a neighboring one by moving up, down, left or right.
In the end, we will see how the claim follows quite easily when we have a reconfiguration of that type.

Let $k$ be the number of squares, and consider an arbitrary reconfiguration from $c_0$ to $c_1$.
Let $x_i,y_i\colon [0,T]\rightarrow P$ be a parametrization of the center coordinates of square $i\in [k]=\{1,2,\ldots,k\}$, so that $(x_1,y_1,\ldots,x_k,y_k)$ describe the simultaneous motion of all squares from the start configuration at time $0$ to the target configuration at time $T$.
We now consider the movement we get by using $\lfloor x_i \rfloor$ instead of $x_i$ for all $i$.
This makes the $x$-coordinates constant at integers, except at discrete points in time where a subset of the squares $I\subseteq [k]$ ``jump'' from one integer $x$-coordinate to a neighboring one.
We then insert an extra time unit at these jump events, where the squares $I$ move horizontally at unit speed, keeping the $y$-coordinates fixed and keeping all the other squares $[k]\setminus I$ fixed.
Let $x'_i,y'_i\colon[0,T']\rightarrow P$ be the obtained reconfiguration, where each square again moves continuously from the start to the target configuration.

We claim that the obtained functions $x'_i,y'_i$ are also a valid reconfiguration, that is, we need to verify that the objects (i) stay within $P$ and (ii) don't collide with each other.
Part (i) follows because $P$ is a grid polygon.
To prove part (ii), suppose first that two squares $i,j\in [k]$ collide at a time that is not during a jump event, it is because there is a time $t'\in [0,T']$ where $\lvert x_i'(t)-x_j'(t)\rvert \in\{0,1,\ldots,7\}$ and $\lvert y_i'(t)-y_j'(t)\lvert < 8$.
Then there is also a corresponding time $t\in [0,T]$ where $\lvert x_i(t)-x_j(t)\rvert <8$ and $\lvert y_i(t)-y_j(t)\rvert < 8$.
Hence, there was also a collision in the original reconfiguration, which is a contradiction.
If the collision happens during a jump event, it can happen between two squares that move in the same direction (left or right) or in opposite directions or only one square moves.
In either case, there would also be a collision at the beginning or the end of the jump event, so the other argument applies again.

We can now apply the same modification to the $y$-coordinates, i.e., use $\lfloor y'_i\rfloor$ instead of $y'_i$ and insert time units at the events where these values jump to get continuous motion.
We obtain coordinate functions $x_i''$ and $y_i''$ where at every point in time, only a subset of the squares move, and they all move horizontally or vertically at unit speed from one integer position to a neighboring one.
Note that since the start and target configurations $c_0$ and $c_1$ use integer coordinates, our functions $x''_i,y''_i$ provide a reconfiguration between the same two configurations.

Consider a group of squares moving simultaneously, say horizontally.
There will in general be some moving right and the rest will move left.
We can move the ones moving right first and then move the ones moving left.
Thus, we obtain  a reconfiguration where all moving squares move simultaneously  in the same direction.
For a group of squares moving right, we can move them one by one in order of decreasing $x$-coordinates, and similarly for the other directions.
We have then obtained a reconfiguration with the claimed properties.

It remains to show how we can obtain a reconfiguration from $\sigma_0$ to $\sigma_1$ of $\phi$ from such a reconfiguration of the squares in $P$.
By \Cref{lem:implication,lem:or}, the PUSH gadgets and OR gadgets ensure that for each OR gadget, one of the participating variables does not make a pyramid.
Hence, for a positive clause, a primary variable must be plus; otherwise if it is minus or transitional, all variables would make pyramids.
Likewise, for a negative clause, a primary variable must be minus.
Therefore, each clause is satisfied by at least one of the variables, so any intermediate configuration from $c_0$ to $c_1$ corresponds to a satisfying assignment to $\phi$.
Since the squares move one at a time, only one variable changes at a time.
We conclude that we obtain a reconfiguration of $\phi$.
\end{proof}

Finally, we prove the containment in $\PSPACE$.

\begin{lemma}\label{lem:containment}
    Unlabeled reconfiguration of $8\times 8$ squares in simple grid polygons is contained in $\PSPACE$.
\end{lemma}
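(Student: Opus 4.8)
The plan is to reduce the question to reachability in a finite, implicitly represented graph and then invoke Savitch's theorem. As a preliminary normalization I would assume that all coordinates in the instance are integers: the polygon $P$ is already a grid polygon, and if the square positions in the start and target configurations $c_0,c_1$ are rationals with common denominator $q$, scaling the whole instance by $q$ makes all coordinates integral (turning $P$ into a grid polygon and the squares into $8q\times 8q$ squares) while blowing up the bit size of the instance by only a polynomial factor. If $c_0$ or $c_1$ is not a valid configuration we immediately reject.

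Next I would define the \emph{discrete reconfiguration graph} $\mathcal{G}$: a node is a placement of the $k$ squares at integer positions inside the bounding box of $P$ for which every square lies in $P$ and no two squares have overlapping interiors, and two nodes are adjacent exactly when they differ by sliding a single square by one unit in one of the four axis directions. A node is encoded by $2k$ integers, each of magnitude at most the largest coordinate $M$ of the (rescaled) polygon, hence by $O(k\log M)$ bits, which is polynomial in the input size; there are at most $(M+1)^{2k}=2^{O(k\log M)}$ nodes; and checking whether a list of placements is a node, and whether two nodes are adjacent, takes polynomial time. Two observations make the edges of $\mathcal{G}$ faithful to continuous motion. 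First, along an edge the moving square stays inside $P$: the region it sweeps during a one-unit slide is a union of unit cells, each of which already belongs to the square's start placement or its end placement, both of which lie in the grid polygon $P$. Second, the moving square avoids the others: for axis-aligned squares of side $8q$ that overlap in the direction orthogonal to the slide, the times at which the slide causes an interior overlap with a fixed square form an open interval of length $16q$, and an open interval of length $16q>1$ that contains neither $t=0$ nor $t=1$ is disjoint from all of $[0,1]$; hence if neither endpoint of the edge is a collision, no intermediate time is either. Thus every edge of $\mathcal{G}$ realizes, and is realized by, a legal continuous unit move.

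The crux is the equivalence: there is a continuous collision-free reconfiguration from $c_0$ to $c_1$ inside $P$ if and only if $c_1$ is reachable from $c_0$ in $\mathcal{G}$. The ``if'' direction is immediate by concatenating unit moves. For ``only if'' I would reuse, essentially verbatim, the discretization carried out in the proof of \Cref{lem:dir2}: given any reconfiguration, replacing each coordinate function by its floor and inserting unit-speed segments at the finitely many jump events produces a collision-free reconfiguration in which at every moment only a subset of the squares move, each by a single unit along an axis, and serializing these so that one square moves at a time gives a walk in $\mathcal{G}$ (using that $c_0,c_1$ have integer coordinates, so they are nodes). That argument relies only on $P$ being a grid polygon and on $c_0,c_1$ being integral, so it transfers from the polygon built in the reduction to an arbitrary input polygon without change. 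This equivalence is the only real content of the proof, and I expect it to be the main obstacle — but it is resolved by importing the earlier argument rather than proving anything new.

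Finally, $\mathcal{G}$ has $2^{\mathrm{poly}}$ nodes, each stored in polynomially many bits, with polynomial-time-testable adjacency, so deciding whether $c_1$ is reachable from $c_0$ is in $\mathsf{NPSPACE}$: nondeterministically guess the path node by node, keeping in memory only the current node and a binary step counter of $O(k\log M)$ bits, and accept upon reaching $c_1$. By Savitch's theorem the problem therefore lies in $\PSPACE$, as claimed.
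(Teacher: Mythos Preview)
Your proof is correct and follows essentially the same approach as the paper: discretize any continuous motion to integer-coordinate configurations via the floor-and-jump argument of \Cref{lem:dir2}, and then decide reachability in the resulting exponential-size graph using polynomial space. The only cosmetic differences are that the paper appeals to Reingold's $\mathsf{LOGSPACE}$ algorithm for undirected $st$-connectivity instead of Savitch's theorem, and you are more explicit about normalizing possibly non-integral input configurations by rescaling.
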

\begin{proof}
We observe first that the largest coordinate of the polygon
$P$ is bounded by $2^n$, where $n$ is the length of the
encoded reconfiguration instance.

We reuse the idea mentioned in the proof of~\Cref{lem:dir2}
that any reconfiguration can be turned into a reconfiguration
where only one square moves at a time and all static squares 
align with the grid. To store a configuration, where
every square aligns with the grid, we store the 
center coordinates of every square. 
Let $k< n$ be the number of squares. 
In this case we have no more than $({2^{2n}})^{k}< 4^{(n^2)}$ 
configurations that are aligned with the grid.
Furthermore, they can be efficiently represented.
Consider
a graph whose vertices are given by these configurations and
where two vertices are adjacent if the corresponding
configurations differ by a square that moved one unit.
The reconfiguration problem is then reduced to 
an undirected  $st$-reachability problem, which lies in
$\mathsf{LOGSPACE}$~\cite{Reingold}. Thus we need
at most polynomial space.
\end{proof}

Our main result now follows from \Cref{thm:monoton3SATpspace,lem:dir1,lem:dir2,lem:containment}.

\Squares*

As we only consider perfect configurations in our reduction, each square contains it reference center all the time by \cref{lem:refCenters}. Thus, we can easily label each square and obtain hardness for the labeled variant. 

\SquaresCor*

\section {Unit disks in polygons with holes}\label{sec:unitdisks}

 Recently, it was proven that the unlabeled motion planning problem for disk robots with two different radii in a domain with obstacles is $\PSPACE$-hard~\cite{DBLP:conf/fun/BrockenHKLS21}.
We turn our attention to unit disk robots and show that the problem remains $\PSPACE$-hard.

Before proving our result unit disks in a polygon with holes, we prove it for an \emph{arc-gon}, that is, a shape whose boundary consists of straight line segments and circular arcs.

\begin{proposition}\label{prop:arcgons}
Reconfiguration of unlabeled unit disk robots in an arc-gon is $\PSPACE$-hard. 
\end{proposition}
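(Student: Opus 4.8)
The plan is to reduce from the configuration-to-configuration version of \textsc{Nondeterministic Constrained Logic} (NCL), the same source problem used in \Cref{sec:mptsat}. Given an NCL instance $(G,G')$ with $G$ an oriented planar cubic graph whose edges are colored blue and red, I would build an arc-gon $P$ together with a start and target configuration of unit disks so that legal motions of the disks correspond exactly to feasible edge-reversal sequences from $G$ to $G'$. The overall architecture follows the standard ``build a domain on top of the plane graph $G$'' template of \cite{DBLP:conf/fun/BrockenHKLS21,DBLP:conf/fun/BrunnerCDHHSZ21,DBLP:journals/ijrr/SoloveyH16}: each edge of $G$ becomes an edge gadget (a narrow channel) containing a distinguished disk whose position in the channel encodes the edge's current orientation, and each vertex of $G$ becomes a vertex gadget that enforces the NCL indegree-$\ge 2$ constraint on the three incident edge-disks. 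Because $G$ is embedded in the plane, the bounded faces of $G$ become holes of $P$, which is why the statement only claims hardness for a domain with holes (an arc-gon here, refined to a polygon with holes afterward).

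The key steps, in order, are: (1) \emph{Edge gadget.} For an oriented edge $e=uv$, carve a channel of width slightly more than $2$ (one disk diameter) so that a single ``token'' disk can slide along it but cannot pass another disk; arrange the channel so the token can occupy a position near $u$ or near $v$, corresponding to $e$ contributing its indegree to $u$ or to $v$. Circular arcs at the channel bends let a unit disk turn corners smoothly, which is exactly what the arc-gon freedom buys us over straight polygons. (2) \emph{Red/blue weighting.} Blue edges count with multiplicity $2$; I realize this by making blue edge gadgets able to supply ``weight'' to two vertices in a way that the vertex gadget reads as two units, e.g.\ by a wider pocket or by a pair of coupled disks, while red edges supply one unit. (3) \emph{Vertex gadget.} At each vertex, build an arc-gon pocket together with a small number of ``guard'' disks so that the guard disks can all be parked only if the incoming weight from the three incident edge-disks is at least $2$; equivalently, if fewer than two edges point inward, some guard disk has nowhere to go and no motion is possible past that point. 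For the AND-vertex (one blue, two red) and the OR-vertex (three blue) we need the two respective arithmetic thresholds, handled by two pocket shapes. (4) \emph{Correspondence and locality.} Show that in any reachable configuration each token disk stays within its own edge channel and each guard within its vertex pocket (a separation/invariant argument, analogous to how \Cref{lem:refCenters} pins squares to reference centers), so the only ``moves'' are: slide one edge-token from the $u$-side to the $v$-side or back, which is legal precisely when the two vertex gadgets at its endpoints remain satisfiable — i.e.\ precisely an NCL edge reversal keeping all indegree constraints. (5) \emph{Start/target.} Place the disks according to $G$ and $G'$; then reconfigurations between the two disk configurations correspond bijectively to feasible reversal sequences from $G$ to $G'$, giving $\PSPACE$-hardness (membership in $\PSPACE$ is standard and not claimed in this proposition).

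The main obstacle I expect is step (3)–(4): designing the vertex gadget so that it \emph{exactly} enforces the indegree-$\ge 2$ threshold and nothing more — in particular making sure there is no ``sneaky'' motion in which guard disks or a token disk leave their intended region, temporarily violating the encoding, and then return (the classic soundness direction of such reductions). Establishing the invariant that confines each disk to its gadget — the disk-packing analogue of \Cref{lem:area4} and \Cref{lem:refCenters} — is where the real work lies, since unit disks in an arc-gon have two continuous degrees of freedom each and can in principle slosh around; the channel widths, arc radii, and the number and placement of guard disks must be chosen so that a counting/area or distance argument forces the confinement. A secondary technical point is replacing the arc-gon by an honest polygon with holes at the end: approximate every circular arc by a fine polygonal chain and check that the finitely many critical clearances in the construction are preserved up to the approximation error, which is routine once the arc-gon construction is fixed.
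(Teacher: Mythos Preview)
Your high-level plan matches the paper's: reduce from configuration-to-configuration NCL, lay the planar graph $G$ out with a small tile set, and build an arc-gon gadget per tile so that disk positions encode edge orientations and vertex tiles enforce the indegree-$\ge 2$ constraint. Two concrete points where the paper differs from your sketch are worth noting.

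First, the paper does \emph{not} differentiate red and blue at the edge level. All edges get the same ``door'' mechanism: a $1\times 2$ gap in the wall between adjacent rooms, with a single door disk held in place by four small hooks so that it has a one-dimensional range of length~$1$ between two extreme positions (interior/exterior to a given room). The red/blue distinction is implemented entirely at the vertex tiles (two AND shapes and one OR shape). Your suggestion of widening pockets or coupling disks to give blue edges weight~$2$ is not needed and would complicate the soundness argument.

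Second, and more importantly, the soundness mechanism is not an area/counting confinement \`a la \Cref{lem:refCenters}. The paper's trick is to insert a \emph{straight} tile between every pair of non-straight tiles (by shrinking the grid embedding by a factor~$2$ and filling the gaps). The straight tile is designed so that at all times at least one of its two door disks is in the exterior extreme (their Lemma on straight tiles), which means that while one edge is mid-reversal, all other door disks adjacent to vertex tiles are pinned at extremes. This discretizes the analysis: one may then enumerate, for each of the five tile types, exactly which combinations of door-disk extremes are feasible, and check that these coincide with the NCL constraints. This is precisely the ``no sneaky motion'' argument you flagged as the main obstacle; the straight-tile separation is the missing idea that resolves it, and without it your channel-and-guard sketch would need a different mechanism to rule out two edges being simultaneously half-reversed around a common vertex.
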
 

\begin{proof}
 
We reduce from the 
configuration-to-configuration version of \textsc{Nondeterministic Constrained Logic} (NCL) \cite{DBLP:journals/tcs/HearnD05, DBLP:conf/icalp/HearnD02,HearnDemaineBook}. 
An NCL instance is given by a directed planar graph and two feasible orientations $G$ and $G'$. The edges are
colored blue or red, where every vertex is either incident to three blue edges
or to one blue and two red edges. A feasible orientation must respect 
an indegree of at least two at every vertex, where blue edges count with multiplicity 2. An NCL-instance is positive, if and only if 
there is a sequence of edge-reorientations that transforms $G$ into $G'$ and any intermediate orientation is
feasible.  

From a given NCL instance, we construct an arc-gon together with a one-to-one correspondence between configurations of unit disks within the arc gon, and configurations of the NCL instance, with the property that two configurations of the NCL instance are reachable from each other if and only of the corresponding configurations of unit disks can be reconfigured into one another.

\begin{figure}[htb]
    \centering
         \includegraphics[page=5, width=\textwidth]{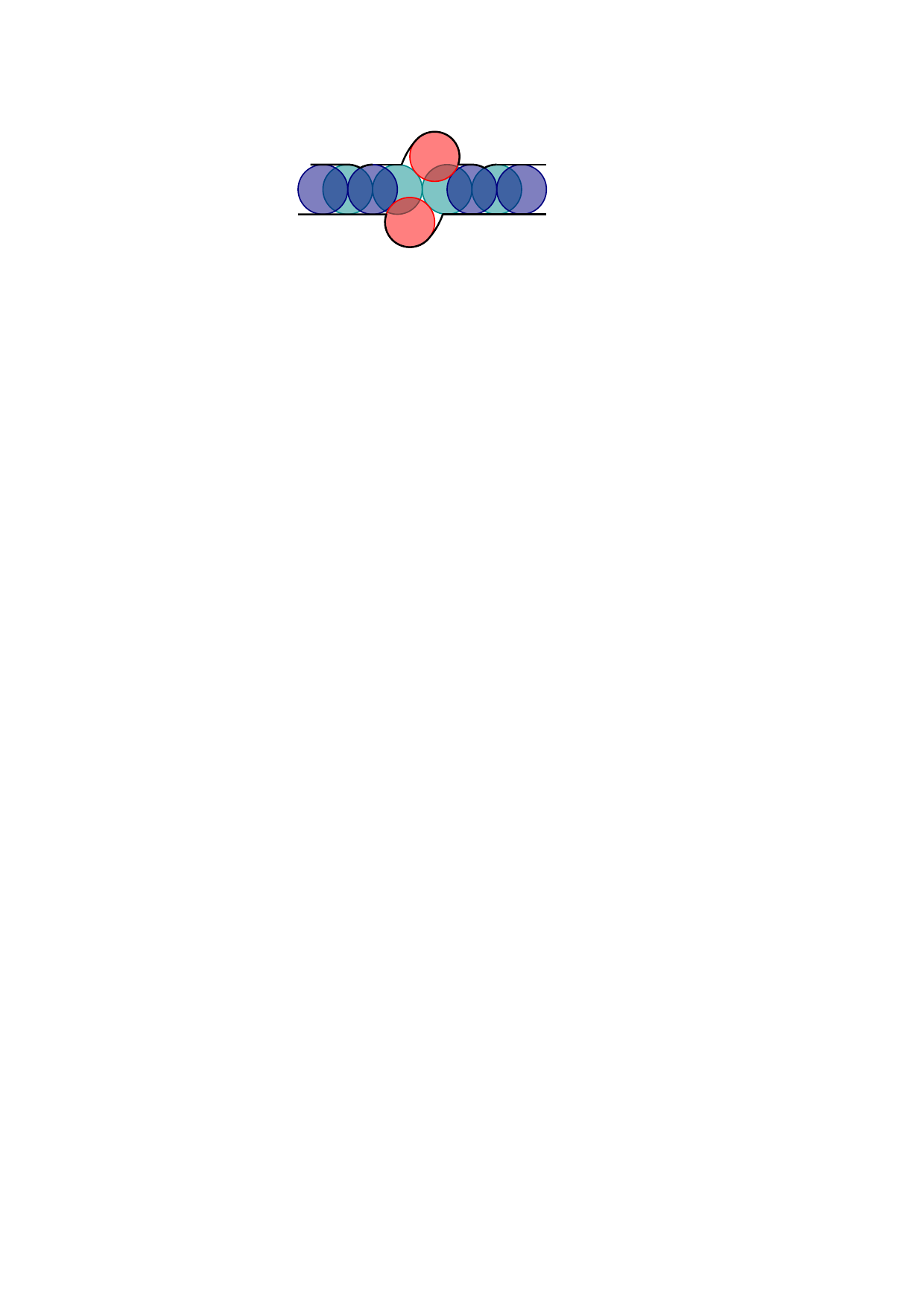}  
    \caption{A layout of an NCL instance using a limited set of tiles.}
    \label{fig:tiles}
\end{figure}

Following Brocken, van der Heijden, Kostitsyna, Lo-Wong and Surtel~\cite {DBLP:conf/fun/BrockenHKLS21}, we begin by laying out $G$ using a finite set of {\em tiles}; for instance, using the grid-embedding algorithm for cubic graphs by Tamassia~\cite {T87}, see \cref {fig:tiles} (left).
At this point, we include an additional step compared to~\cite {DBLP:conf/fun/BrockenHKLS21}: we shrink all tiles by a factor of $2$ and place additional {\em straight} tiles in the gaps. This increases the total number of tiles by at most a factor of $2$ but ensure that there is at least one straight tile between every pair of vertices of $G$, see \cref {fig:tiles} (right).

Specifically, there are only five different types of tiles that are used in such a layout (up to symmetry and grouping the blue and red connection tiles together), see \cref {fig:fourtypes}: we have {\em straight} tiles that contain a single piece of and edge of $G$ that enters and leaves through opposite sides of the tile; we have {\em turn} tiles that contain a single piece of an edge of $G$ that enters and leaves through adjacent sides of the tile; we have {\em AND} tiles that contain a single {\em AND} vertex of $G$ in two flavors depending on whether the red edges enter the tile through adjacent or opposite sides of the tile; and we have {\em OR} tiles that contain a single {\em OR} vertex of $G$. These tiles will be represented by four different geometric constructions. The geometric constructions are then concatenated and together will form a single polygonal domain.

\begin{figure}[htb]
    \centering
         \includegraphics[page=6]{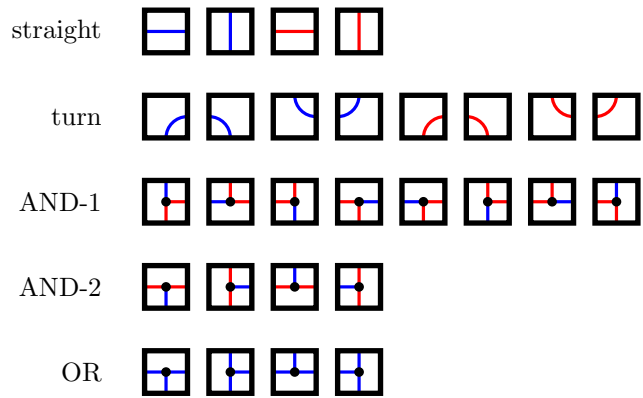}  
    \caption{Five types of tiles are needed.}
    \label{fig:fourtypes}
\end{figure}

In~\cite {DBLP:conf/fun/BrockenHKLS21}, each tile is realized as a {\em room}, with a designated disk functioning as a {\em door} between adjacent rooms. We follow this general idea: each tile is represented by a $6 \times 6$ interior space surrounded by a wall of thickness $1$. A door is represented by a $1 \times 2$ gap in the wall, in which a single unit disk fits. For an illustration, consider \cref {fig:room}.

\begin{figure}[htb]
    \centering
         \includegraphics[page=1,scale=.8]{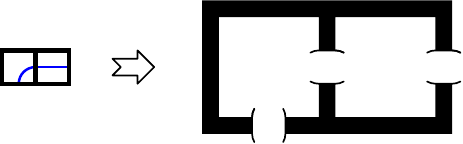}  
    \caption{Tiles become rooms, and rooms for adjacent tiles share an opening if the graph connects through the corresponding side of the tile (irrespective of its color). }
    \label{fig:room}
\end{figure}

 We further restrict the movement of this disk by adding four small ``hooks'' into the interiors of the two rooms in such a way that our door disk must stay in touch with the wall, and has freedom to move by exactly $1$ unit from one extreme position in which half of the disk is in the wall and half is interior to the room, to another extreme position in which the other half of the disk is in the wall and the other half is interior to the adjacent room, as depicted in \cref {fig:door}.

\begin{figure}[htbp]
    \centering
         \includegraphics[page=3,scale=.8]{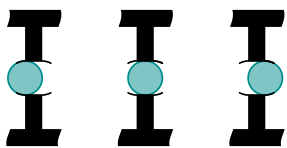}  
    \caption{All doors have a {\em door disk} that must stay within the door opening. It has a continuous segment of valid placements of length $1$. Door disks are marked in {\em green} in all figures.}
    \label{fig:door}
\end{figure}

For the interiors of the rooms, we present carefully constructed designs that mimic the behavior of the vertices and edges of $G$ during a reconfiguration.
Our construction closely follows~\cite {DBLP:conf/fun/BrockenHKLS21}.
One main difference is that we require a gadget that ensures an edge can only have one orientation at a time, which we implement in our {\em straight} tile gadget.
Our designs are listed in \cref {fig:tilegeometry}.

We call the two extreme states of a door disk \emph{exterior} and \emph{interior}; all other positions are \emph{intermediate}.
A priori, multiple door disks of a room could be in an intermediate state. To avoid this, we make sure that each pair of tiles is separated by a {\em straight} tile. We now argue that at least one door disks of a straight tile is exterior. 

\begin {lemma} \label {lem:extreme}
 In every valid configuration of the room of a  straight tile, at least one door disk is exterior. Moreover, in any reconfiguration of a door disk from the  exterior state to the interior state, there is a point in time when both door disks are exterior. 
\end {lemma}

\begin{figure}[htbp]
    \centering
         \includegraphics[page=2,scale=.8]{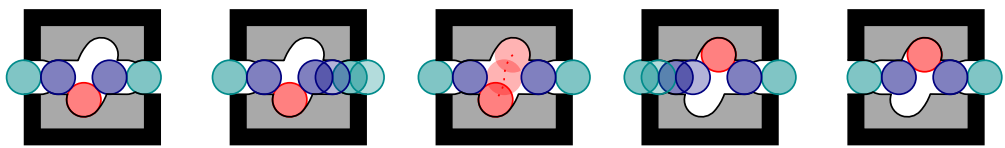}  
    \caption{A sequence of moves that imitates the reorientation of an edge. }
    \label{fig:reorient}
\end{figure}

\begin {proof}
A move sequence corresponding to reorienting the edge of a {\em straight} tile is illustrated in \cref {fig:reorient}.
While the center of the red disk is in the bottom half of the room, the left door disk must remain in its exterior state. Symmetrically, while the center of the red disk is in the top half of the room, the right door disk must remain in its exterior state. Therefore, the left and right door disk cannot both be at an intermediate state simultaneously.
\end {proof}

Using Lemma~\ref {lem:extreme}, we may now assume that all door disks are in extremal positions while we are in the process of reorienting edges; hence, we can consider each reorientation individually.
This allows us to fully categorize the feasible configurations inside each tile, see also \cref {fig:tilegeometry}:
\begin {itemize}
  \item {\bf Straight} tile: Exactly the configurations with both door disks exterior, or with one exterior and one interior, are feasible.
  \item {\bf Turn} tile: Exactly the configurations with both door disks exterior, or with one exterior and one interior, are feasible.
  \item {\bf AND-1} tile: If the top door disk is {\em interior}, then the only feasible configuration has both other door disks exterior. If the top door disk is {\em exterior}, then all configurations of the other door disks are feasible.
  \item {\bf AND-2} tile: If the top door disk is {\em interior}, then the only feasible configuration has both other door disks exterior. If the top door disk is {\em exterior}, then all configurations of the other door disks are feasible.
  \item {\bf OR} tile: Exactly those configurations with at least one exterior door disk are feasible.
\end {itemize}

\begin{figure}[htb]
    \centering
         \includegraphics[page=5,scale=.8]{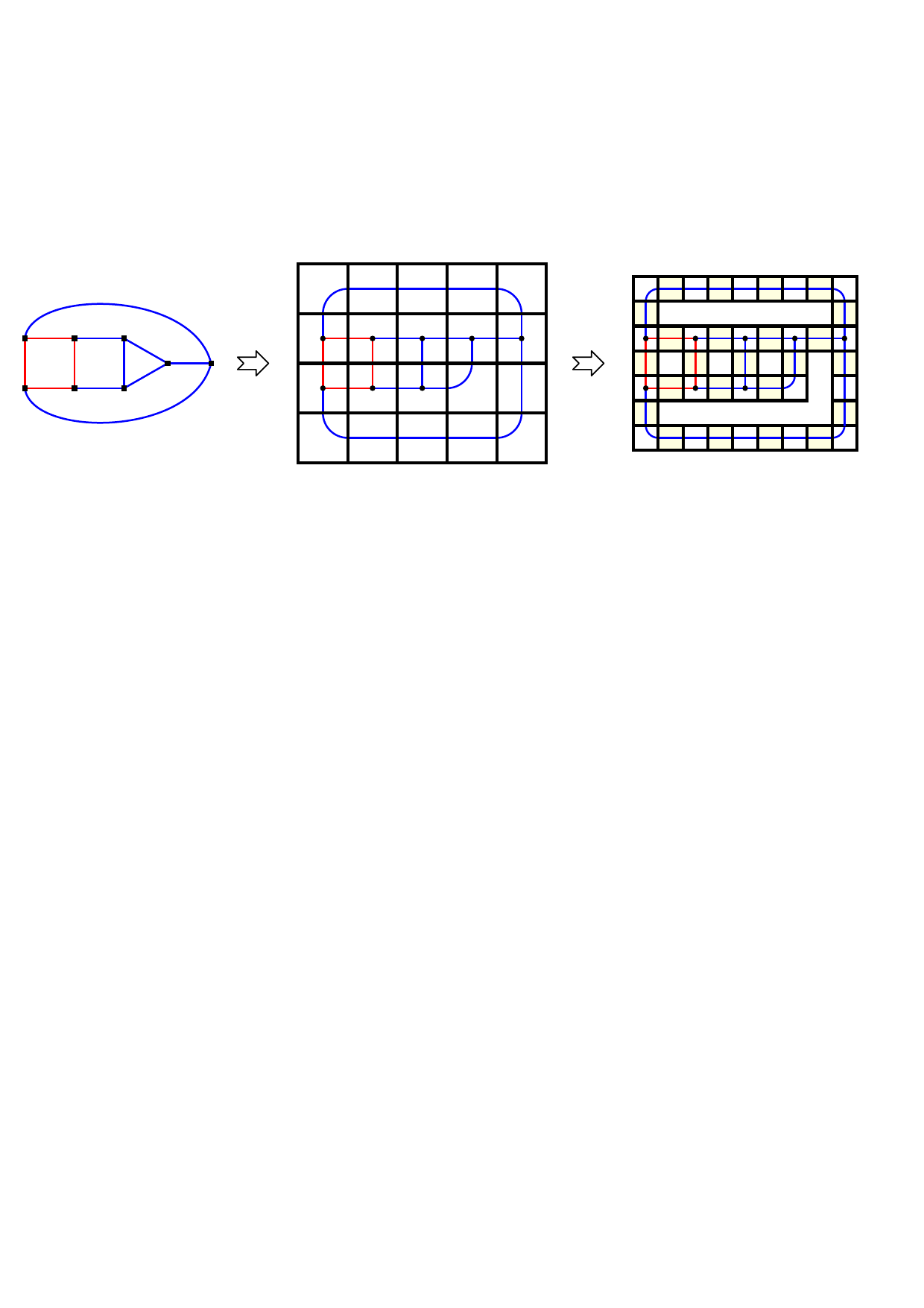}  
    \caption{Geometric realizations of the five types of tiles are needed. Marked in orange is the {\em dilation} of the shapes by $1$; these are exactly the locations where disk centers can be without intersecting the boundary. On the right two different valid configurations of disks with different extremal positions of the door disks.}

    \label{fig:tilegeometry}
\end{figure}

For the AND and OR tiles, which represent vertices of $G$, this behavior exactly models the constraint that the in-degree should be at least $2$, when red edges count as $1$ and blue edges count as $2$, and when an arrow pointing {\em into} a vertex corresponds to the door disk being {\em exterior}.
The straight and turn tiles represent (pieces of) edges of $G$, and their constructions ensure that a single edge can never be oriented in both directions simultaneously; if a door disk is exterior for an AND or OR tile, it is interior for the incident straight tile. Note that they do allow for a configuration where the edge is not pointing in either direction; indeed, such configurations are necessary as intermediate configuration while we reorient an edge.
\end{proof}

Then, we turn the arc-gon constructed in the proof of \Cref{prop:arcgons} into a polygon with holes and thereby complete the proof of \cref{thm:disks}. Observing that no two disks may swap within the polygon implies \PSPACE-hardness for the labeled variant, i.e.,  \cref{cor:disksLabeled}.

To complete the proof of  complete the proof of \cref{thm:disks}, it remains to show that we can turn the arc-gon of \Cref{prop:arcgons} into a polygon with holes.

\DisksUnlabeled*

\begin{proof}

The construction in \Cref{prop:arcgons} results in an arc-gon with the following properties: In addition to line segments (infinite radius arcs) there are arcs of three different radii in the construction:
\begin {itemize}
  \item Arcs of radius $1$, which create pockets for disks to sit in. These occur in all tile designs; for example, the little hooks at the doors (Figure~\ref {fig:door}) are arcs of radius $1$. Arcs of radius $1$ in the boundary correspond to single points in the placement space for the disks.
  \item Arcs of radius $2$, which constrain disks to rotate around a point on their boundary. For instance, the OR tile utilizes arcs of radius $2$ in the bottom part. Arcs of radius $2$ in the boundary correspond to arcs of radius $1$ in the placement space for the disks.
  \item Arcs of radius $3$, which constrain disks to roll around other (stationary) disks. Arcs of radius $3$ are employed in e.g. the {\em straight} tile design: when the blue disks are in exterior positions, the red disk can switch state by rolling first around one and then the other disk. Arcs of radius $3$ in the boundary correspond to arcs of radius $2$ in the placement space for the disks.
\end {itemize}
We now adapt the construction to not require any arcs on the boundary.

First, we claim that we can safely replace all arcs of radius $1$ by polylines that approximate the arcs from the outside, without changing the behavior of the construction at all. Indeed, a disk that is in contact with an arc of radius $1$ cannot move in any direction represented by the arc, and this remains the case if we replace the arc by a discrete set of contact points (so long as they are spaced by less than $\pi$).

Next, we claim that we can also replace the arcs of radius $2$ or $3$ by polylines; however, this will increase the feasible space for the disks, and thus we need to argue more carefully.

Specifically, we replace all arcs of radius $3$ by polylines that approximate these arcs from the outside, by placing $k$ contact points evenly spaced around the arcs. This now allows the disks to move by less than a distance of $\delta$ into the newly created pockets, thus releasing contact with the blue disk, which in turn can move by the same amount, finally allowing the door disk to also move by the same amount.

We set $k = 25$, which implies that $\cos (\frac {2\pi}{25}) \approx \frac {3}{3.1}$ so $\delta \approx 0.1$, which is sufficient.
Now, this implies we have only a weaker version of Lemma~\ref {lem:extreme}: we can no longer ensure that door disks are in extreme positions, but rather than they are within distance $0.1$ from an extreme position. This means we have to revisit the possible configurations within our tiles, but they still only allow the states as listed before.

Finally, we also replace the circular arcs of radius $2$ by polylines; again choosing $k=25$ will create a slack of at most $0.1$ (in fact fewer points are needed since the arcs are smaller).
\end{proof}

In order to obtain \cref{cor:disksLabeled} it suffices to observe  the polygon is thin enough to guarantee that no two disks may swap. This implies the following result.

\DisksLabeled*

\section{Discussion}\label{sec:discussion}
We proved that reconfiguration of \MPTSAT is $\PSPACE$-hard, and thereby complemented a sequence of recent results establishing the relationships between \TSAT variants and their reconfiguration counterparts.
This result was a stepping stone towards showing $\PSPACE$-hardness of reconfiguration of unit squares in a simple grid polygon.
While the reduction builds on ideas from Abrahamsen and Stade~\cite{DBLP:journals/corr/abs-2404-09835},  we also needed several new ideas. 
We  highlight some notable differences:
Many squares in the packings from~\cite{DBLP:journals/corr/abs-2404-09835} are \emph{locked}, so there is no possible reconfiguration between two packings that correspond to two satisfying assignments where just a single variable differs in truth value.
For instance, it is not possible to move the squares in the variable component so that they change between the states \emph{plus} and \emph{minus}, which are used to represent the truth values of a variable.
Our reduction uses a modified construction for variables with plus and minus states and an additional \emph{transitional} state that can get from one to the other.

A similar issue arises with respect to the pyramids, which in~\cite{DBLP:journals/corr/abs-2404-09835} are always present from the primary to the auxiliary variable components and from the auxiliary variable components to the OR gadgets.
An extra column of \emph{push squares} is raised towards the positive clauses in case a variable was minus.
However, it is not possible to reconfigure between the packings where the push squares are raised and where they are not.
Instead, in this paper, we allow the entire pyramid to be created or removed.
A pyramid is created when all the squares of the pyramid raise or fall compared to the surrounding squares.
The functionality that pyramids are only occasionally present requires the use of the \emph{helper rows} in the variable component, which therefore more involved than in~\cite{DBLP:journals/corr/abs-2404-09835}.
Finally, the OR gadget also required adjustments, so that we can reconfigure from one satisfying assignment to another.

The construction from~\cite{DBLP:journals/corr/abs-2404-09835} resulted in $\NP$-hardness of packing $2\times 2$ squares in grid polygons, where a \emph{grid polygon} is an orthogonal polygon with corners at integer coordinates.
The improved flexibility required in the present paper, for the squares to be able to reconfigure, comes at the cost of a finer granularity of the polygon compared to the size of the squares:
We show $\PSPACE$-hardness of reconfiguration of $8\times 8$ squares in a grid polygon. 

We have also shown that reconfiguration of unit disks in a polygon with holes is $\PSPACE$-hard.
A natural next question is unit disks in a simple polygon. At a first glance, it seems difficult to extend our insights in the square setting to this case, because it is troublesome to propagate movement independently in multiple directions in a large region filled with disks. Therefore, we expect that a $\PSPACE$-hardness proof, if it exists, will likely require different techniques.

\bibliography{lib.bib}
\end{document}